\DeclareMathOperator*{\argmin}{argmin}
\DeclareMathOperator*{\cov}{cov}
\DeclareMathOperator*{\pr}{pr}
\newtheorem{theorem}{Theorem}
\newtheorem{proposition}{Proposition}
\newtheorem{lemma}{Lemma}
\newcommand{\footremember}[2]{%
    \footnote{#2}
    \newcounter{#1}
    \setcounter{#1}{\value{footnote}}%
}
\title{Inference for non-stationary time series quantile regression with inequality constraints}
\author{Yuan Sun$^1$ \and Zhou Zhou$^2$\footremember{alley}{Email: zhou.zhou@utoronto.ca}}
\date{%
    $^1$Lunenfeld-Tanenbaum Research Institute, Sinai Health, Toronto, Canada\\%
    $^2$Department of Statistical Sciences, University of Toronto, Toronto, Canada\\[2ex]%
}
\begin{document}
\maketitle

\begin{abstract}
We consider parameter inference for linear quantile regression with non-stationary predictors and errors, where the regression parameters are subject to inequality constraints. We show that the constrained quantile coefficient estimators are asymptotically equivalent to the metric projections of the unconstrained estimator onto the constrained parameter space. Utilizing a geometry-invariant property of this projection operation, we propose inference procedures - the Wald, likelihood ratio, and rank-based methods - that are consistent regardless of whether the true parameters lie on the boundary of the constrained parameter space. We also illustrate the advantages of considering the inequality constraints in analyses through simulations and an application to an electricity demand dataset. 
\end{abstract}

\section{Introduction}
Quantile regression has become a powerful method for analyzing the distributional relationship between the responses and the predictors since the seminal work by \cite{KB1978}. While a significant amount of work focuses on scenarios with independent observations, quantile regression has been studied under various time series settings. For example, the quantile autoregression model and the quantile autoregressive conditional heteroskedasticity model  were proposed in \cite{KX2006} and \cite{KZ1996}, respectively. \cite{P1991} studied the asymptotics of regression quantiles in an $m$-dependent setting. \cite{KM1994} considered the case where the errors are stationary and long-range dependent Gaussian random variables. 

Non-stationary time series analyses have attracted increasing attention in recent years, as non-stationary behaviors have been observed in temporally ordered data collected from a wide range of practical applications. 
\cite{Z2013} introduced a piecewise locally stationary process, which allows the underlying data-generating mechanism of the series to change abruptly around a finite number of breakpoints and smoothly evolve in between. Due to its ability to capture general forms of non-stationary behavior in both predictors and errors, this piecewise locally stationary framework has been assumed in subsequent time series literature such as \cite{Z2015,WZ2018,RS2019}, among others.
We also adopt the piecewise locally stationary framework in this paper, and we refer the readers to Section 2.2 for the detailed definition and discussion of piecewise locally stationary time series.
See also \citet{DPV2011,KP2015,DRW2019,HHY2019,DP2021,K2022,BR2023} among others for recent developments on non-stationary time series analysis.

Consider the following non-stationary time series quantile regression  at a given quantile level $\tau$:
\begin{equation}\label{eq:model} 
    y_i=x_i^T \beta_0(\tau) + \epsilon_{i,\tau},\quad  i=1,2,\ldots,n,
\end{equation}
where $\{x_i=(1,x_{i2},\ldots,x_{ip})\}_{i=1}^n$ is a $p$ dimensional piecewise locally stationary time series of predictors that always include the intercept, and $\beta_0(\tau)=(\beta_1(\tau),...,\beta_p(\tau))$ is a $p$ dimensional vector of coefficients. We shall write $\beta_0(\tau)$ as $\beta_0$ throughout this paper to simplify the notation. For identifiability, we require the $\tau$th conditional quantile of the piecewise locally stationary error process $\epsilon_{i,\tau}$ given $x_i$ to be 0. 

In this paper, we consider the inference of model \eqref{eq:model} when the regression coefficients $\beta_0$ are subject to inequality constraints. Inequality constraints are sometimes necessary to ensure model validity (e.g., an autoregressive conditional heteroskedasticity model requires all coefficients to be non-negative). There are also scenarios where prior knowledge suggests that certain constraints should be imposed. In demand analysis, it is usually reasonable to assume that the demand for a product decreases as the product's price increases, so the coefficient of price could constrained to be non-positive when regressing demand on price.
As another example, \cite{HN1999} studied the degradation of roof flashing of U.S. army bases and naturally assumed that the percentage of roof flashing in good condition could only decrease over time.

Although inequality constraints are commonly encountered in applications, they are sometimes overlooked in the analyses due to the lack of available methods. However, taking the inequality constraints into account offers at least two advantages. First, when the inequality constraints are not considered, it can be difficult to carry out further analysis when the fitted parameters fail to satisfy the constraints. Second, considering the inequality constraints can restrict the parameters into a smaller space, thereby improving the estimation accuracy and hypothesis testing power.

For quantile regression, \cite{KN2005} proposed an algorithm for parameter computation under inequality constraints. \cite{P2019} studied the asymptotics of the constrained quantile process for independent data. \cite{LZL2020} and \cite{WLY2021} considered $l_1$-penalized quantile regression with inequality constraints. \cite{QY2015} utilized constrained quantile regression to ensure monotonicity in their nonparametric quantile process model. 
As far as we know, no results on (non-stationary) time series quantile regression with inequality constraints are available in the literature. 

In this paper, we aim to develop inference methods for constrained quantile regression where both the predictors and the errors are piecewise locally stationary. With the observation that the constrained quantile estimator of $\beta_0$ can be approximated by a matrix projection of the unconstrained estimator, we derive the limiting distribution of the constrained quantile coefficient estimator. We also consider a likelihood ratio test and rank-based test for parameter inference under our setting and establish their asymptotic properties. 

However, direct inference based on our asymptotic results is challenging because the limiting distributions of the estimated coefficients and test statistics are non-standard and involve 1) the matrix projection operation, which is not continuous when the coefficients are at the boundary; and 2) the conditional density of the errors and the long-run covariance matrix, both of which are both unknown and change over time with possible jumps. To address these issues, we propose a projected multiplier bootstrap procedure to approximate the limiting distributions.

Our bootstrap algorithm utilizes a simple convolution of block sums of the quantile regression gradient vectors with i.i.d. standard normal random variables to consistently approximate the limiting distribution of the unconstrained estimator under complex temporal dynamics. The key in the projected multiplier bootstrap is to notice that the projection direction can be estimated consistently using the Powell sandwich estimator (\cite{Powell1991}) under smoothly and abruptly time-varying data generating mechanisms of the predictors and errors. The limiting distribution of the constrained estimator can then be approximated by projecting the convolution term from the multiplier bootstrap onto this estimated direction. The geometry-invariant property of the projection operation ensures the consistency of the projected multiplier bootstrap procedure, regardless of whether $\beta_0$ lies on the boundary of the constraints.


The remainder of this paper is organized as follows. In Section 2, we formally introduce the problem settings and review the piecewise locally stationary framework. Section 3 shows our main results. More specifically, we study the asymptotic properties of the constrained quantile estimator, propose the likelihood ratio test and the rank-based test in Section 3.1, and introduce the projected multiplier bootstrap algorithm in Section 3.2. Simulation studies and a real data example are given in Sections 4 and 5, respectively. Section 6 presents the regularity conditions. The proofs of our theoretical results are given in the Appendix.

\section{Preliminaries}
\subsection{Settings}

In model \eqref{eq:model}, assume that $\beta_0$ satisfies the inequality constraints $C\beta_0\geq c$, where $C$ is a $q \times p$ full rank matrix with $1\leq q\leq p$ and $c$ is a $q$ dimensional vector. By transformation of variables, the constraints can be simplified into 
\begin{equation}
\label{constraints}
    \beta_0 \in Q,\quad Q=\{(\beta_1,\ldots,\beta_p)\,|\,\beta_j \geq 0,j=1,\ldots,q\}.
\end{equation}

Let $\tilde{\beta}_n$ be the estimated coefficients when the inequality constraints (\ref{constraints}) are ignored, \cite{KB1978} showed that 
\begin{equation}
\label{beta_tilde}
    \tilde{\beta}_n = \argmin \limits_{\beta\in{\mathbb{R}^p}}\sum_{i=1}^n \rho_{\tau}(y_i-x_i^T \beta),
\end{equation}
where $\rho_{\tau}(x)=x(\tau-I(x<0))$ is the so-called check function. 
Then $\hat{\beta}_n$, the estimated coefficient under the inequality constraints, can be naturally estimated by
\begin{equation}
\label{beta_hat}
    \hat{\beta}_n = \argmin\limits_{\beta\in Q}\sum_{i=1}^n \rho_{\tau}(y_i-x_i^T \beta).
\end{equation}
Solving (\ref{beta_hat}) is a quadratic programming problem and can be tackled with the algorithms proposed in \cite{KN2005}. However, the asymptotic behavior of $\hat{\beta}_n$ with piecewise locally stationary predictors and errors is unclear and will be investigated in this paper.

\subsection{Piecewise Locally Stationary Time Series Models}

We adopt the class of piecewise locally stationary processes in \cite{Z2013} to model the predictors and errors.

We call $\{\epsilon_{i,\tau}\}_{i=1}^n$ a piecewise locally stationary process generated by filtration $\mathcal{F}_i$ and $\mathcal{G}_i$ with $R$ break points if there exist constants $0=b_0<b_1<\ldots<b_R<b_{R+1}=1$ and nonlinear filters $D_0,...,D_R$, such that
\begin{equation*}
    \epsilon_{i,\tau}=D_r(t_i,\mathcal{F}_i,\mathcal{G}_i),\, b_r<t_i\leq b_{r+1},
\end{equation*}
where $t_i=i/n$, $\mathcal{F}_i=\{\ldots,\eta_0,\eta_1,\ldots,\eta_i\}$, $\mathcal{G}_i=\{\ldots,\zeta_0,\zeta_1,\ldots,\zeta_i\}$, and $\{\eta_i\}_{i=-\infty}^{\infty}$ and $\{\zeta_i\}_{i=-\infty}^{\infty}$ are independent i.i.d random variables.
 Without loss of generality, we assume $\{x_i\}_{i=1}^n$ shares same break points as $\{\epsilon_{i,\tau}\}_{i=1}^n$, and let
\begin{equation*}
    x_i=H_r(t_i,\mathcal{F}_{i-1},\mathcal{G}_i),\, b_r<t_i\leq b_{r+1},
\end{equation*}
where $H_0,\ldots,H_R$ are non-linear filters. 

The piecewise locally stationary process can capture a broad of class non-stationary behavior in practice because it allows the underlying data generating mechanism to evolve smoothly between breakpoints (provided that the filters are smooth in $t$) while undergoing abrupt changes at these breakpoints. Note that we include the filtration $\mathcal{F}_{i-1}$ and $\mathcal{F}_i$ into $x_i$ and $\epsilon_{i,\tau}$, respectively, to accommodate possible auto-regressive behavior in the predictors; that is, $x_i$ may contain lagged values of the response. Other information that may influence both the predictors and the errors is captured in $\mathcal{G}_i$.
Examples of piecewise locally stationary processes under the current formulation can be found in \cite{WZ2018}.

To study the asymptotic property of a piecewise locally stationary process, we need to define a measure of its temporal dependence structure. Intuitively, the dependence of a process can be evaluated by replacing the inputs ($\eta_i$ and $\zeta_i$) $k$ steps earlier with corresponding i.i.d. copies and comparing the change in the output ($x_i$ and $\epsilon_{i,\tau}$). A larger change in the output indicates stronger dependence. Let $\|\cdot\|_v=(E(| \cdot | )^v)^{1/v}$ denote the $\mathcal{L}_v$ norm and assume $\max_{1\leq i\leq n}\|\epsilon_{i,\tau}\|_v<\infty$ for some $v>1$, we define the $k$th dependence measure for $\{\epsilon_{i,\tau}\}_{i=1}^n$ in $\mathcal{L}_v$ norm as 
\begin{equation*}
    \Delta_v(D,k)=\max_{0\leq r\leq R}\sup_{b_r<t<b_{r+1}}\|D_r(t,\mathcal{F}_k,\mathcal{G}_k)-D_r(t,\mathcal{F}_k^*,\mathcal{G}_k^*)\|_v,
\end{equation*}
where $\mathcal{F}_k^*=\{\eta_k,\eta_{k-1},\ldots,\eta_0^*,\eta_{-1},\ldots\}$, $\eta_0^*$ is independent of $\{\eta_i\}_{i=-\infty}^{\infty}$ and is identically distributed as $\eta_0$, and the filtration $\mathcal{G}_k^*$ is defined in the same way.
The $k$th dependence measure for $\{x_i\}_{i=1}^n$ is defined similarly as 
\begin{equation*}
    \Delta_v(H,k)=\max_{0\leq r\leq R}\sup_{b_r<t<b_{r+1}}\|H_r(t,\mathcal{F}_{k-1},\mathcal{G}_k)-H_r(t,\mathcal{F}_{k-1}^*,\mathcal{G}_k^*)\|_v.
\end{equation*}

\section{Methodology and Its Theoretical Properties}

\subsection{Test Statistics}
Suppose that we are interested in testing:
\begin{equation}
\label{test}
    H_0:\beta_0^{(A)}=0,\, \beta_0\in Q \quad vs \quad H_{\alpha}:\beta_0^{(A)}\neq0,\,\beta_0\in Q
\end{equation}
where $A=\{a_1,\ldots,a_q\}\subset\{1,\ldots,p\}$ with $a_1<\ldots<a_q$ be a set of index and $x^{(A)}=(x_{a_1},\ldots,x_{a_q})$ for a $p$ dimensional vector $x=(x_1,\ldots,x_p)$. We will consider the likelihood ratio test and the rank-based test as these two types of test are widely used for quantile coefficients inference of independent data without inequality constraints (\cite{KM1999}).

Let $A^c$ denotes the complement of set $A$, define the likelihood ratio test as
\begin{equation}
\label{LR}
    T_n^{LR}=\sum_{i=1}^n\big(\rho_{\tau}(y_i-(x_i^{(A^c)})^{\top}\hat{\beta}_n^{A^c})-\rho_{\tau}(y_i-x_i^{\top}\hat{\beta}_n)\big),
\end{equation}
where $\hat{\beta}_n^{A^c}$ is the estimate of $\beta$ under the restricted model that only includes $x_i^{(A^c)}$ as covariates. 
The test statistic $T_n^{LR}$ is the likelihood ratio test defined in Chapter 3 of \cite{K2005} without normalization. It compares the empirical loss under the restricted model and the full model, and a large value of $T_n^{LR}$ is in favor of the alternative hypothesis. 

Let $\psi(u)=\tau-I(u<0)$ be the left derivative function of $\rho_{\tau}(\cdot)$. Define the rank-based test as 
\begin{equation}
\label{RB}
    T_n^{RB}=(S_{1,n}-S_{0,n})^{\top}D_n^{-1}(S_{1,n}-S_{0,n}),
\end{equation}
where $S_{1,n}=\sum_i \psi_{\tau}(y_i-x_i^{\top}\hat{\beta}_n)x_i^{(A)}$, $S_{0,n}=\sum_i \psi_{\tau}(y_i-(x_i^{(A^c)})^{\top}\hat{\beta}_n^{A^c})x_i^{(A)}$ and $D_n=\sum_i (x_i^{(A)})^{\top}(x_i^{(A)})$. 
Note that the rank-based test in \cite{K2005} is constructed with the regression rankscores, which are the solutions to the dual problem of Equation (\ref{beta_tilde}). Because the regression rankscores for observation $i$ at $\tau$ could be approximated with $\tau-\psi_{\tau}(y_i-x_i^{\top}\hat{\beta}_n)$, we construct our rank-based test with the $\psi_{\tau}$ function directly.
Unlike the rank-based test without inequality constraints, $T_n^{RB}$ requires fitting both the restricted model and the full model because $S_{1,n}$ may not be 0 with inequality constraints imposed. 
Also note that while $D_n$ (times a constant) standardizes the rank-based test with no inequality constraints and independent observations, this is not the case in our setting. We still include $D_n$ in our test statistic $T_n^{RB}$ for consistency with other quantile regression rank-based tests.

To study the properties of $\hat{\beta}_n$, $T_n^{LR}$ and $T_n^{RB}$, we need the following lemma.

\begin{lemma}
Under Conditions (C1)-(C4) given in Section 6, 
\begin{equation*} 
\begin{aligned}
    \sup\limits_{|\beta-\beta_0|\leq n^{-1/2}\log n}|\sum_{i=1}^n(\rho_{\tau}(y_i-x_i^{\top}\beta)-\rho_{\tau}&(y_i-x_i^{\top}\beta_0))+(\beta-\beta_0)^{\top}G_n\\
    &-\frac{1}{2}(\beta-\beta_0)^{\top}K_n(\beta-\beta_0)|=o_p(1),
\end{aligned}
\end{equation*}
\begin{sloppypar}
where $G_n=\sum_{i=1}^n x_i\psi_{\tau}(\epsilon_{i,\tau})$ and $K_n=\sum_{i=1}^n E\big(f_r(\frac{i}{n},0\mid\mathcal{F}_{i-1},\mathcal{G}_i)x_i x_i^{\top}\big)$, $f_r(t,x|\mathcal{F}_{k-1},\mathcal{G}_k)=\frac{\partial}{\partial x} \pr(D_r(t,\mathcal{F}_k,\mathcal{G}_k)\leq x\mid\mathcal{F}_{k-1},\mathcal{G}_k)$, for $b_r<t\leq b_{r+1}$. 
\end{sloppypar}
\end{lemma}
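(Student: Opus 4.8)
The plan is to prove the quadratic expansion by first linearizing the check loss with Knight's identity and then analyzing the resulting remainder. Applying $\rho_\tau(u-v)-\rho_\tau(u) = -v\psi_\tau(u) + \int_0^v\{I(u\le t)-I(u\le 0)\}\,dt$ with $u=\epsilon_{i,\tau}=y_i-x_i^\top\beta_0$ and $v=x_i^\top(\beta-\beta_0)$ and summing over $i$ gives the exact identity
\[
\sum_{i=1}^n\{\rho_\tau(y_i-x_i^\top\beta)-\rho_\tau(y_i-x_i^\top\beta_0)\} = -(\beta-\beta_0)^\top G_n + R_n(\beta),\qquad R_n(\beta):=\sum_{i=1}^n\int_0^{x_i^\top(\beta-\beta_0)}\{I(\epsilon_{i,\tau}\le t)-I(\epsilon_{i,\tau}\le 0)\}\,dt,
\]
so the lemma reduces to showing $\sup_{|\beta-\beta_0|\le n^{-1/2}\log n}|R_n(\beta)-\tfrac12(\beta-\beta_0)^\top K_n(\beta-\beta_0)| = o_p(1)$; note that $R_n(\beta)\ge 0$ pointwise, consistent with $K_n$ being positive semidefinite. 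I would then split $R_n(\beta)=\bar R_n(\beta)+\{R_n(\beta)-\bar R_n(\beta)\}$, where $\bar R_n(\beta)$ is obtained by conditioning each summand of $R_n(\beta)$ on $(\mathcal F_{i-1},\mathcal G_i)$; since $x_i=H_r(t_i,\mathcal F_{i-1},\mathcal G_i)$ is measurable with respect to this field, the conditioning averages only over the innovation $\eta_i$ entering $\epsilon_{i,\tau}$.

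For the conditional-mean part $\bar R_n(\beta)$, writing $F_r(t,x\mid\mathcal F_{i-1},\mathcal G_i)=\pr(D_r(t,\mathcal F_i,\mathcal G_i)\le x\mid\mathcal F_{i-1},\mathcal G_i)$, the $i$th conditional expectation equals $\int_0^{x_i^\top(\beta-\beta_0)}\{F_r(t_i,t\mid\mathcal F_{i-1},\mathcal G_i)-F_r(t_i,0\mid\mathcal F_{i-1},\mathcal G_i)\}\,dt$. A second-order Taylor expansion in $t$ about $0$, justified by the boundedness and smoothness conditions on the conditional density in (C1)--(C4), gives $\bar R_n(\beta)=\tfrac12\sum_i f_r(t_i,0\mid\mathcal F_{i-1},\mathcal G_i)(x_i^\top(\beta-\beta_0))^2$ with an error bounded by a constant times $\sum_i|x_i|^3(n^{-1/2}\log n)^3=O_p(n^{-1/2}\log^3 n)$, uniformly over the stated ball. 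It then remains to replace $\sum_i f_r(t_i,0\mid\mathcal F_{i-1},\mathcal G_i)x_ix_i^\top$ by its expectation $K_n$: since $|\beta-\beta_0|\le n^{-1/2}\log n$, the quadratic form in the difference is at most $n^{-1}\log^2 n$ times the operator norm of $\sum_i\{f_r(t_i,0\mid\mathcal F_{i-1},\mathcal G_i)x_ix_i^\top-E(f_r(t_i,0\mid\mathcal F_{i-1},\mathcal G_i)x_ix_i^\top)\}$, which is $O_p(n^{1/2})$ by a standard $L^2$ bound for partial sums of piecewise locally stationary arrays expressed through the dependence measures $\Delta_v(H,\cdot)$; this contributes $O_p(n^{-1/2}\log^2 n)=o_p(1)$, and altogether $\sup|\bar R_n(\beta)-\tfrac12(\beta-\beta_0)^\top K_n(\beta-\beta_0)|=o_p(1)$.

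For the fluctuation part $R_n(\beta)-\bar R_n(\beta)=\sum_i\xi_i(\beta)$, the key structural fact is that $\{\xi_i(\beta)\}_i$ is a martingale difference array with respect to $\mathcal H_i=\sigma(\mathcal F_i,\mathcal G_i)$: each $\xi_i(\beta)$ is $\mathcal H_i$-measurable, and $E(\xi_i(\beta)\mid\mathcal H_{i-1})=0$ because $\mathcal H_{i-1}\subseteq\sigma(\mathcal F_{i-1},\mathcal G_i)$ while $E(\xi_i(\beta)\mid\mathcal F_{i-1},\mathcal G_i)=0$ by construction. Hence for fixed $\beta$, $\mathrm{Var}\{\sum_i\xi_i(\beta)\}=\sum_i E\{\xi_i(\beta)^2\}$; using $|\xi_i(\beta)|\lesssim|x_i^\top(\beta-\beta_0)|\,I(|\epsilon_{i,\tau}|\le|x_i^\top(\beta-\beta_0)|)+(x_i^\top(\beta-\beta_0))^2$ and the bounded conditional density gives $E\{\xi_i(\beta)^2\}=O(|x_i|^3(n^{-1/2}\log n)^3)$, so $\mathrm{Var}\{\sum_i\xi_i(\beta)\}=O(n^{-1/2}\log^3 n)\to 0$. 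To upgrade this to uniform control over the $n^{-1/2}\log n$-ball, I would cover the ball by cubes of side $n^{-3/2}$, hence with only $N=n^{O(1)}$ grid points: on each cell, monotonicity of $t\mapsto I(\epsilon_{i,\tau}\le t)$ bounds the oscillation of $\xi_i(\cdot)$ by $2|x_i^\top(\beta-\beta')|$, which sums to a deterministic $O_p(n^{-1/2})$ term, while at the grid points a Burkholder/Rosenthal inequality for the martingale difference array together with a union bound over the $N$ points (the factor $\log N=O(\log n)$ being absorbed by taking a sufficiently high moment, which uses the higher-moment conditions on $x_i$ in (C1)) shows the maximum is $o_p(1)$; convexity of $R_n(\cdot)$ and $\bar R_n(\cdot)$ provides an alternative route from the finite net to the whole ball. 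Combining the three bounds yields the lemma.

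I expect the uniformization of the fluctuation term to be the main obstacle: the neighborhood shrinks at rate $n^{-1/2}\log n$, so one cannot simply invoke an off-the-shelf stochastic equicontinuity result on a fixed compact set, and the maximal inequality must be carried out at the correct scale and in a manner compatible with the mixed smooth/abrupt temporal dependence of the piecewise locally stationary arrays. The conditional-mean expansion, by contrast, is a comparatively routine Taylor-plus-law-of-large-numbers computation once the conditioning on $(\mathcal F_{i-1},\mathcal G_i)$ has been set up.
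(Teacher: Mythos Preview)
Your argument is correct and complete in outline, but it proceeds along a different route from the paper's own proof. The paper works at the \emph{score} level rather than at the loss level: it invokes Lemmas~A.1--A.2 of \cite{WZ2018} to obtain a uniform bound of the form
\[
\sup_{|\beta|\le \delta_n}\Big|\sum_{i}\psi_\tau(\epsilon_{i,\tau}-x_i^\top\beta)x_i-\sum_iE\{\psi_\tau(\epsilon_{i,\tau}-x_i^\top\beta)x_i\}-\sum_i\psi_\tau(\epsilon_{i,\tau})x_i\Big|=O_p\!\Big(\big(\textstyle\sum_i\nu_i(\delta_n)\big)^{1/2}\log n+\sqrt n\,\delta_n\Big),
\]
then \emph{integrates} this in $\beta$ to pass from $\psi_\tau$ to $\rho_\tau$, and finally Taylor-expands the deterministic expectation $\sum_iE\{\rho_\tau(\epsilon_{i,\tau}-x_i^\top\beta)-\rho_\tau(\epsilon_{i,\tau})\}$ (via a convexity sandwich) to produce $\tfrac12\beta^\top K_n\beta$. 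By contrast, you start from Knight's identity, split the integral remainder into a conditional-mean part and a fluctuation part, and handle the latter through the martingale-difference structure relative to $\sigma(\mathcal F_i,\mathcal G_i)$ together with a discretization/chaining argument. Your decomposition is more self-contained and makes transparent why the temporal dependence of $\{\epsilon_{i,\tau}\}$ does not enter the fluctuation analysis (conditioning on $(\mathcal F_{i-1},\mathcal G_i)$ integrates out only the fresh innovation $\eta_i$); the paper's approach is shorter because the delicate uniformization you flag as the main obstacle is outsourced wholesale to the cited lemmas of \cite{WZ2018}, and the passage from $\psi_\tau$ to $\rho_\tau$ is a one-line integration. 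Both arguments land on the same quadratic expansion, and your identification of the uniform control of $R_n-\bar R_n$ as the crux is exactly right; in the paper that step is hidden inside the black-box citation.
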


Lemma 1 shows that the difference in the check loss function $\sum_{i=1}^n(\rho_{\tau}(y_i-x_i^{\top}\beta)-\rho_{\tau}(y_i-x_i^{\top}\beta_0))$ can be approximated by a quadratic function of $\beta$. This result is well-known when observations are independent (\cite{BRW1992}). We show that it also holds when the predictors and errors are piecewise locally stationary. 

Since Lemma 1 holds for any $|\beta-\beta_0|\leq n^{-1/2}\log n$, it naturally holds for such $\beta$ that meets the inequality constraints. Therefore, the convexity of $\rho_\tau$ implies the consistency of both $\tilde{\beta}_n$ and $\hat{\beta}_n$. The Bahadur representation $(\tilde{\beta}_n-\beta_0)-K_n^{-1} G_n=o_p(n^{-1/2})$ can also be derived from Lemma 1. The asympototic normality of $\tilde{\beta}_n$ can then be established by showing that $K_n/n$ converges to a matrix $\mathcal{M}$ and $n^{-1/2} G_n$ converges to a normal distribution $U$ with mean 0 and covariance $\int_0^1 \Omega(t)\,dt$.
The explicit forms of $\mathcal{M}$ and $\Omega(t)$ are given later.

Define the metric projection onto region $Q$ with respect to a positive definite symmetric matrix $\Sigma$ as
\begin{equation}\label{pro}
    \mathcal{P}_{Q,\Sigma}(\cdot)=\argmin\limits_{\beta\in Q}(\beta-\cdot)^{\top}\Sigma(\beta-\cdot).
\end{equation}
For $x\in \mathbb{R}^p$ and $\beta \in Q$, let
\begin{equation}
\label{theta}
    \Theta_{Q,\Sigma}(\beta,x)=\lim\limits_{n\rightarrow\infty}(\mathcal{P}_{Q,\Sigma}(n\beta+x)-n\beta).
\end{equation}
By Proposition 1 in \cite{Z2015}, if $\Sigma$ is positive definite, $\mathcal{P}_{Q,\Sigma}(a_1\beta+x)-a_1\beta=\mathcal{P}_{Q,\Sigma}(a_2\beta+x)-a_2\beta$ for $a_1$ and $a_2$ large enough, regardless of whether $\beta$ is on the boundary of $Q$. This geometry-invariant property guarantees the existence of the limit in Equation (\ref{theta}).

The metric projection $\mathcal{P}_{Q,\Sigma}(\cdot)$ plays a key role in investigating the asymptotic properties of $\hat{\beta}_n$. 
For least-squares regressions, $\hat{\beta}_n=\mathcal{P}_{Q,\Sigma}(\tilde{\beta}_n)$ with $\Sigma=\sum_{i=1}^n x_i x_i^T/n$ (\cite{Z2015}).
Such a relationship does not hold for quantile regressions, but we shall show that $\hat{\beta}_n$ could be approximated by the projection of $\tilde{\beta}_n$ with respect to a different $\Sigma$.

By Lemma 1 and the Bahadur representation of $\tilde{\beta}_n$, we have
\begin{equation*}
\begin{aligned}
    \sup\limits_{|\beta-\beta_0|\leq n^{-1/2}\log n,\beta\in Q}|\sum_{i=1}^n(&\rho_{\tau}(y_i-x_i^{\top}\beta)-\rho_{\tau}(y_i-x_i^{\top}\beta_0))\\
    &+(\beta-\beta_0)^{\top}K_n(\tilde{\beta}_n-\beta_0)-\frac{1}{2}(\beta-\beta_0)^{\top}K_n(\beta-\beta_0)|=o_p(1).
\end{aligned}
\end{equation*}
Let $\hat{\gamma}_n=\mathcal{P}_{Q,K_n/n}(\tilde{\beta}_n)$, the projection of $\tilde{\beta}_n$ onto $Q$ with respect to the matrix $K_n/n$. We notice that $\hat{\gamma}_n$ is the minimizer of $(\beta-\beta_0)^{\top}K_n(\tilde{\beta}_n-\beta_0)-\frac{1}{2}(\beta-\beta_0)^{\top}K_n(\beta-\beta_0)$. Using the properties of convex functions, we can show that $\hat{\beta}_n$ could by approximated by $\hat{\gamma}_n$. Because
\begin{equation*}
\begin{aligned}
    \surd{n}(\hat{\beta}_n-\beta_0) &\approx \surd{n} \mathcal{P}_{Q,\frac{k_n}{n}}(\tilde{\beta}_n) - \surd{n} \beta_0 \\
    &\approx \mathcal{P}_{Q,\frac{k_n}{n}}(\surd{n}\beta_0 + \surd{n}K_n^{-1}G_n) - \surd{n} \beta_0,
\end{aligned}
\end{equation*}
it is then easy to show that 
\begin{equation}
\label{asym_beta}
    \surd{n}(\hat{\beta}_n-\beta_0)\Rightarrow\Theta_{Q,\mathcal{M}}(\beta_0,\mathcal{M}^{-1}U),
\end{equation}
where $\Rightarrow$ denotes weak convergence.
The limiting distributions of $T_n^{LR}$ and $T_n^{RB}$ are also be derived by Equation (\ref{asym_beta}) and continuous mapping theorem. The above discussion is formally summarized in Theorem 1 below.

\begin{sloppypar}
The following notations are needed for Theorem 1. 
Define the long-term covariance matrix 
    $$\Omega(t)=\sum_{i=-\infty}^{\infty}\cov\{H_r(t,\mathcal{F}_{-1},\mathcal{G}_0)\psi_{\tau}(D_r(t,\mathcal{F}_0,\mathcal{G}_0)),H_r(t,\mathcal{F}_{i-1},\mathcal{G}_i)\psi_{\tau}(D_r(t,\mathcal{F}_i,\mathcal{G}_i))\}$$ for $b_r<t\leq b_{r+1}$.
Write $\mathcal{M}=\int_0^1 M(t)dt$ where $M(t)=E\{f_r(t,0\mid\mathcal{F}_{-1},\mathcal{G}_0) H_r(t,\mathcal{F}_{-1},\mathcal{G}_0)H_r(t,\mathcal{F}_{-1},\mathcal{G}_0)^{\top}\}$ for $b_r<t\leq b_{r+1}$, and $\mathcal{M}_0=\int_0^1 M_0(t)dt$ where $M_0(t)=E\{f_r(t,0\mid\mathcal{F}_{-1},\mathcal{G}_0)H_r^{(A^c)}(t,\mathcal{F}_{-1},\mathcal{G}_0)H_r^{(A^c)}(t,\mathcal{F}_{-1},\mathcal{G}_0)^{\top}\}$.
For the rank-based test, let $\mathcal{M}_0^{RB}=\int_0^1 M_0^{RB}(t)dt$ where $M_0^{RB}(t)=E\{f_r(t,0\mid\mathcal{F}_{-1},\mathcal{G}_0)H_r^{(A)}(t,\mathcal{F}_{-1},\mathcal{G}_0)H_r^{(A^c)}(t,\mathcal{F}_{-1},\mathcal{G}_0)^{\top}\}$, and $\mathcal{M}^{RB}=\int_0^1 M^{RB}(t)dt$ where $M^{RB}(t)=E\{f_r(t,0\mid\mathcal{F}_{-1},\mathcal{G}_0)H_r^{(A)}(t,\mathcal{F}_{-1},\mathcal{G}_0)H_r(t,\mathcal{F}_{-1},\mathcal{G}_0)^{\top}\}$.
Write $\mathcal{D}=\int_0^1 D(t)dt$ where $D(t)=E\{H_r^{(A)}(t,\mathcal{F}_{-1},\mathcal{G}_0)H_r^{(A)}(t,\mathcal{F}_{-1},\mathcal{G}_0)^{\top}\}$.
\end{sloppypar}

\begin{sloppypar}
\begin{theorem}
Under regularity conditions (C1)-(C5), we have

(\romannumeral1) $\surd{n}(\hat{\beta}_n-\beta_0)\Rightarrow\Theta_{Q,\mathcal{M}}(\beta_0,\mathcal{M}^{-1}U)$, where $U$ follows a normal distribution with mean 0 covariance $\int_0^1 \Omega(t)\,dt$;

(\romannumeral2) Under $H_0$, $T_n^{LR}\Rightarrow g_1\{\Theta_{Q,\mathcal{M}_0}(\beta_0^{(A^c)},\mathcal{M}_0^{-1}U^{(A^c)}),\mathcal{M}_0,U^{(A^c)}\}-g_1\{\Theta_{Q,\mathcal{M}}\big(\beta_0,\mathcal{M}^{-1}U),\mathcal{M},U\}$, where $g_1(x,y,z)=\frac{1}{2}x^{\top}yx-z^{\top}x$;

(\romannumeral3) Under $H_0$, $T_n^{RB}\Rightarrow g_2\{\mathcal{M}^{RB}\Theta_{Q,\mathcal{M}}\big(\beta_0,\mathcal{M}^{-1}U),\mathcal{M}^{RB}_0\Theta_{Q,\mathcal{M}_0}(\beta_0^{(A^c)},\mathcal{M}_0^{-1}U^{(A^c)}),\mathcal{D}\}$, where $g_2(x,y,z)=(x-y)^{\top}z^{-1}(x-y)$.
\end{theorem}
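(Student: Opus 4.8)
The strategy is to assemble the three parts from pieces already isolated in the discussion preceding the theorem: the quadratic expansion of Lemma 1, the geometry-invariant property of the metric projection (Proposition 1 of \cite{Z2015}), and the limit theory $K_n/n\to\mathcal{M}$, $n^{-1/2}G_n\Rightarrow U$, $D_n/n\to\mathcal{D}$ for piecewise locally stationary arrays, all consequences of Conditions (C1)--(C5). First I would establish the rates: by the convexity of $\rho_\tau$ together with Lemma 1 — which says the objective is uniformly within $o_p(1)$ on $\{|\beta-\beta_0|\le n^{-1/2}\log n\}$ of $q_n(\beta):=-(\beta-\beta_0)^{\top}G_n+\tfrac12(\beta-\beta_0)^{\top}K_n(\beta-\beta_0)$, whose curvature is bounded below by a constant multiple of $n$ in probability since $K_n/n\to\mathcal{M}$ is positive definite while $n^{-1/2}G_n$ is tight — the standard convexity argument confines $\tilde\beta_n$ and (using $\beta_0\in Q$) $\hat\beta_n$ to an $O_p(n^{-1/2})$ ball, together with the Bahadur representation $\tilde\beta_n-\beta_0=K_n^{-1}G_n+o_p(n^{-1/2})$. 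I would also record the score analogue of Lemma 1, obtained from the same chaining/martingale arguments: uniformly over $|\beta-\beta_0|\le n^{-1/2}\log n$, $\sum_{i=1}^n x_i\{\psi_\tau(y_i-x_i^{\top}\beta)-\psi_\tau(\epsilon_{i,\tau})\}=-K_n(\beta-\beta_0)+o_p(\sqrt n)$, together with its ``cross'' version carrying $x_i^{(A)}$ weights and $x_i^{(A^c)}$ inside $\psi_\tau$, whose normalized curvature matrices converge to $\mathcal{M}^{RB}$ and $\mathcal{M}^{RB}_0$.

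\textbf{Part (i).} On the localized region, minimizing the objective over $Q$ is, up to $o_p(1)$, minimizing $q_n$ over $Q$; completing the square, $\argmin_{\beta\in Q}q_n=\mathcal{P}_{Q,K_n/n}(\beta_0+K_n^{-1}G_n)$, which by the Bahadur representation and the Lipschitz continuity of $\mathcal{P}_{Q,K_n/n}$ equals $\hat\gamma_n=\mathcal{P}_{Q,K_n/n}(\tilde\beta_n)$ up to $o_p(n^{-1/2})$. A quantitative convexity estimate then converts the $o_p(1)$ closeness of the two convex objectives into $o_p(n^{-1/2})$ closeness of their constrained minimizers (the lower curvature of order $n$ gives rate $(\epsilon_n/n)^{1/2}$ with $\epsilon_n=o_p(1)$), so $\surd{n}(\hat\beta_n-\hat\gamma_n)=o_p(1)$. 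Because $Q$ is a cone, $\mathcal{P}_{Q,\Sigma}$ is positively homogeneous, hence $\surd{n}(\hat\gamma_n-\beta_0)=\mathcal{P}_{Q,K_n/n}(\surd{n}\beta_0+(K_n/n)^{-1}n^{-1/2}G_n)-\surd{n}\beta_0+o_p(1)$; by the geometry-invariant property this equals $\Theta_{Q,K_n/n}(\beta_0,(K_n/n)^{-1}n^{-1/2}G_n)$ once $n$ is large. Since for fixed $\beta_0$ the map $(\Sigma,x)\mapsto\Theta_{Q,\Sigma}(\beta_0,x)$ is the metric projection onto the tangent cone of $Q$ at $\beta_0$ in the $\Sigma$-inner product — nonexpansive in $x$ and continuous in $\Sigma$ on positive-definite matrices — Slutsky's theorem with $K_n/n\to\mathcal{M}$ and $n^{-1/2}G_n\Rightarrow U$ yields (i).

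\textbf{Parts (ii)--(iii).} Under $H_0$ we have $x_i^{\top}\beta_0=(x_i^{(A^c)})^{\top}\beta_0^{(A^c)}$ and $y_i-(x_i^{(A^c)})^{\top}\beta_0^{(A^c)}=\epsilon_{i,\tau}$; since $E\{\psi_\tau(\epsilon_{i,\tau})\mid\mathcal{F}_{i-1},\mathcal{G}_i\}=0$ and $x_i^{(A^c)}$ is $(\mathcal{F}_{i-1},\mathcal{G}_i)$-measurable, the restricted model inherits Conditions (C1)--(C5) with true value $\beta_0^{(A^c)}\in Q$, curvature matrix $\mathcal{M}_0$, and score limit the $A^c$-subvector $U^{(A^c)}$ of $U$, so Part (i) applies to give $\surd{n}(\hat\beta_n^{A^c}-\beta_0^{(A^c)})\Rightarrow\Theta_{Q,\mathcal{M}_0}(\beta_0^{(A^c)},\mathcal{M}_0^{-1}U^{(A^c)})$; moreover $\hat\beta_n$, $\hat\beta_n^{A^c}$ and $n^{-1/2}G_n$ converge jointly, each estimator being asymptotically a continuous function of $n^{-1/2}G_n$. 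For (ii), write $T_n^{LR}$ as the restricted centered objective minus the full centered objective (both centered at their true values, which agree under $H_0$), substitute $\hat\beta_n^{A^c}$ and $\hat\beta_n$ into Lemma 1's expansion for each model (legitimate by the $O_p(n^{-1/2})$ rates), and pass to the limit: the full term tends to $g_1\{\Theta_{Q,\mathcal{M}}(\beta_0,\mathcal{M}^{-1}U),\mathcal{M},U\}$ and the restricted term to $g_1\{\Theta_{Q,\mathcal{M}_0}(\beta_0^{(A^c)},\mathcal{M}_0^{-1}U^{(A^c)}),\mathcal{M}_0,U^{(A^c)}\}$, giving (ii). For (iii), the score expansions give $n^{-1/2}S_{1,n}=n^{-1/2}G_n^{(A)}-\mathcal{M}^{RB}\surd{n}(\hat\beta_n-\beta_0)+o_p(1)$ and $n^{-1/2}S_{0,n}=n^{-1/2}G_n^{(A)}-\mathcal{M}^{RB}_0\surd{n}(\hat\beta_n^{A^c}-\beta_0^{(A^c)})+o_p(1)$; the common $n^{-1/2}G_n^{(A)}$ cancels in the difference, so $n^{-1/2}(S_{1,n}-S_{0,n})\Rightarrow\mathcal{M}^{RB}_0\Theta_{Q,\mathcal{M}_0}(\beta_0^{(A^c)},\mathcal{M}_0^{-1}U^{(A^c)})-\mathcal{M}^{RB}\Theta_{Q,\mathcal{M}}(\beta_0,\mathcal{M}^{-1}U)$, and with $D_n/n\to\mathcal{D}$ the continuous mapping theorem delivers (iii).

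\textbf{Main obstacle.} The crux is the equivalence $\surd{n}(\hat\beta_n-\hat\gamma_n)=o_p(1)$ in Part (i): Lemma 1 controls the objective only on an $n^{-1/2}\log n$-shrinking neighborhood whereas $\mathcal{P}_{Q,K_n/n}$ is a global operation, so one must first confine both estimators to that neighborhood via the rate bounds, then convert $o_p(1)$ closeness of the convex objectives into $o_p(n^{-1/2})$ closeness of their constrained minimizers using the curvature of order $n$ inherited from $\mathcal{M}$, and finally invoke the geometry-invariant identity of \cite{Z2015} to pin down the limit as $\Theta_{Q,\mathcal{M}}(\beta_0,\mathcal{M}^{-1}U)$ uniformly in whether $\beta_0$ is interior to or on the boundary of $Q$. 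By comparison, Parts (ii)--(iii) are largely bookkeeping once (i) is available: identifying the limiting matrices $\mathcal{M}_0,\mathcal{M}^{RB},\mathcal{M}^{RB}_0,\mathcal{D}$ with the stated integrals through the law of large numbers for piecewise locally stationary arrays, and verifying the restricted model satisfies the regularity conditions under $H_0$.
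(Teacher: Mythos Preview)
Your proposal is correct and follows essentially the same route as the paper: the paper packages the rate bounds, the Bahadur representation, and the key approximation $\surd{n}(\hat\beta_n-\hat\gamma_n)=o_p(1)$ into a separate Proposition A1 (proved via the same convexity/curvature argument you outline), then applies Proposition~1 of \cite{Z2015} together with $K_n/n\to\mathcal{M}$ and $n^{-1/2}G_n\Rightarrow U$ for (i), Lemma~1 at $\hat\beta_n$ and $\hat\beta_n^{A^c}$ for (ii), and the score expansion (your ``score analogue of Lemma~1,'' which is the paper's Equation~(\ref{thm1.0})) for (iii). The only cosmetic discrepancy is a sign in the $S_{1,n},S_{0,n}$ expansions relative to the paper's display, but since $g_2$ is symmetric in its first two arguments this has no effect on the stated limit.
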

\end{sloppypar}

The limiting distributions of $\surd{n}(\hat{\beta}_n-\beta_0)$, $T_n^{LR}$ and $T_n^{RB}$ involves unknown quantities $\Omega(t)$, $f_r(t,0\mid\mathcal{F}_{-1},\mathcal{G}_0)$ and $\beta_0$. Because $\Omega(t)$ and $f_r(t,0\mid\mathcal{F}_{-1},\mathcal{G}_0)$ may change abruptly at breakpoints under the piecewise locally stationary framework, estimating them directly is challenging. Furthermore, we cannot simply replace $\beta_0$ with $\hat{\beta}_n$ because $\Theta_{Q,\Sigma}(\beta,x)$ is not continuous in $\beta$. Additionally, $\hat{\beta}_n$ has a non-zero mass at a point on the boundary of the convex cone $Q$ if the true $\beta_0$ is on the latter boundary. As a result, the asymptotic distribution given in Theorem 1 cannot be applied directly  for inference. Instead, we use the projected multiplier bootstrap as a solution to these challenges.

\subsection{The Projected Multiplier Bootstrap}

We first consider bootstrapping the limiting distribution of $\surd{n}(\hat{\beta}_n-\beta_0)$.

To approximate the behaviour of $U$, let $m$ be a pre-specified block size, and $m^*=n-m+1$. Define
\begin{equation}
\label{Psi1}
    \Psi_m=\sum_{i=1}^{m^*}\frac{1}{\surd{(mm^*)}}(\varpi_{i,m}-\frac{m}{n}\varpi_{1,n})V_i,
\end{equation}
where $\varpi_{i,m}=\sum_{j=i}^{i+m-1}\psi_{\tau}(\hat{\epsilon}_{j,\tau}) x_j$, $\hat{\epsilon}_{i,\tau}=y_i-x_i^{\top}\hat{\beta}_n$ and $\{V_i\}_{i=1}^n$ follow i.i.d standard normal distributions which are independent of $\{\mathcal{F}_i\}_{i=-\infty}^{\infty}$ and $\{\mathcal{G}_i\}_{i=-\infty}^{\infty}$. 
While the block bootstrap is commonly used in time series analysis, as indicated in
\cite{Z2015}, this method is unable to preserve the complex dependence structure of the piecewise locally stationary processes currently assumed. Instead, the multiplier bootstrap, as given in Equations (\ref{Psi1}), offers an alternative. This approach  convolutes the block sums of the gradient vectors from the quantile regression, which locally estimate the covariance matrix $\Omega(t)$, with i.i.d standard normal weights $V_i$.
It is shown in \cite{WZ2018} that the behaviour of $U$ can be approximated by $\Psi_m$.

Let $\phi(\cdot)$ be a kernel density that satisfies $\int \phi(x)dx=0$, $\int \phi(x)x^2dx\leq M$, $\int \phi^2(x)dx\leq M$, and $\int \phi^{\prime 2}(x)dx\leq M$ for some positive constant $M$. Define 
\begin{equation}
 \label{Xi1}   \Xi_{h}=\sum_{i=1}^{n}\frac{\phi(\hat{\epsilon}_{i,\tau}/h)x_ix_i^{\top}}{nh},
\end{equation}
where $h\rightarrow0$ is a bandwidth parameter. This Powell sandwich estimator $\Xi_{h}$ has been widely used to estimate the covariance matrix for quantile regression with independent observations (\cite{Powell1991}). We can show that it can consistently estimate the matrix $\mathcal{M}$ (namely the projection direction) in our settings.

Our remaining task is to approximate the projection operation. Because
$$\surd{n}(\hat{\beta}_n-\beta_0)\approx \mathcal{P}_{Q,\Xi_{h}}(\surd{n}\beta_0+\Xi_{h}^{-1}\Psi_m)-\surd{n}\beta_0,$$
one obvious way is to replace $\beta_0$ with $\hat{\beta}_n$ in the above equation. However, as we mentioned before, this naive replacement is inconsistent because under $H_0$ ($\beta_0$ is at the boundary), $\surd{n}\hat{\beta}_n=\surd{n}\beta_0+O_p(1)$ and the $O_p(1)$ error term is not ignorable.
Alternatively, by the geometry-invariant property of the projection operation, when $n$ is large enough, 
$$\mathcal{P}_{Q,\Xi_{h}}(\surd{n}\beta_0+\Xi_{h}^{-1}\Psi_m)-\surd{n}\beta_0 = \mathcal{P}_{Q,\Xi_{h}}(n^{1/4}\beta_0+\Xi_{h}^{-1}\Psi_m)-n^{1/4}\beta_0.$$ 
Therefore we could instead replace $n^{1/4}\beta_0$ using $n^{1/4}\hat{\beta}_n$ to reduce the multiplication error to 0 asymptotically.

In practice, the block size $m$ in $\Psi_m$ and the bandwidth $h$ in $\Xi_{h}$ should be carefully chosen. The approximation of $U$ will be biased if $m$ is too small to capture the dependence structure, while a larger $m$ will induce a greater variance as the effective number of summands in $\Psi_m$ is smaller. The bandwidth $h$ plays similar role as the bandwidth of a kernel density estimation, and a larger $h$ is associated with larger bias but smaller variance.

We use the minimum volatility method (Chapter 9 of \cite{PRW1999}) for the selection of block size $m$. The idea is that the estimated value should be stable when $m$ is chosen within the reasonable range.
More specifically, let $m_1<\ldots<m_K$ be $K$ candidates of block size. For each candidate $m_k$, we calculate 
$$\hat{\Psi}_{m_k}=\sum_{j=1}^{m_k^*}\frac{(\varpi_{j,m_k}-\frac{m_k}{n}\varpi_{1,n})^{\top}(\varpi_{j,m_k}-\frac{m_k}{n}\varpi_{1,n})}{m_k(n-m_k+1)}.$$
The $m_k$ that minimizes the standard error of $\{\hat{\Psi}_{m_{k+j}}\}_{j=-3}^{j=3}$ is selected.

The bandwidth $h$ is chosen with cross-validation. \cite{RD2019} demonstrated that cross-validation is asympototic optimal for the bandwidth selection in certian local stationary processes. Although our setting differs from that of \cite{RD2019}, we observe that the performance of cross-validation is reasonably good in our simulations.

The detailed steps to construct the confidence interval using the proposed bootstrap method are summarized below. 
\begin{enumerate}[Step 1:]
    \item Select the block size $m$ with the minimum volatility method and the bandwidth $h$ with cross-validation.
    \item Fit the constrained quantile regression with $x_i$ get $\hat{\beta}_n$. Calculate $\Xi_{h}$ as Equation (\ref{Xi1}).
    \item For $i=1,\ldots,m^*$, generate i.i.d standard normal variables $V_i$, and calculate $\Psi_m$ as Equation (\ref{Psi1}).
    Compute $\hat{\Lambda}_n=\mathcal{P}_{Q,\Xi_{h}}(n^{\frac{1}{4}}\hat{\beta}_n+\hat{\Upsilon}_n)-n^\frac{1}{4}\hat{\beta}_n$ where $\hat{\Upsilon}_n=\Xi_{h}^{-1}\Psi_m$.
    \item Repeat Step 3 for $B$ iterations to get $\{\hat{\Lambda}_{n,1},\ldots,\hat{\Lambda}_{n,B}\}$. Let $\hat{q}_{\alpha/2}$ and $\hat{q}_{1-\alpha/2}$ be the $\alpha/2$-th and $(1-\alpha/2)$-th sample percentile of $\{\hat{\Lambda}_{n,1},\ldots,\hat{\Lambda}_{n,B}\}$. The $100(1-\alpha)\%$ confidence interval of $\beta_n$ is given by $(\hat{\beta}_n - \hat{q}_{1-\alpha/2}/\surd{n},\hat{\beta}_n-\hat{q}_{\alpha/2}/\surd{n})$.
 
\end{enumerate}

The limiting distribution of $T_n^{LR}$ and $T_n^{RB}$ under the null hypothesis can be bootstrapped similarly.
Define
\begin{equation}
\label{Xi0}
\Xi_{0,h}=\sum_{i=1}^{n}\frac{\phi(\hat{\epsilon}_{i,\tau}^{A^c}/h)x_i^{(A^c)}x_i^{(A^c)\top}}{nh},   
\end{equation}
\begin{equation}
\label{Xi_RB}
\Xi_{h}^{RB}=\sum_{i=1}^{n}\frac{\phi(\hat{\epsilon}_{i,\tau}/h_n)x_i^{(A)}x_i^{\top}}{nh},
\end{equation}
\begin{equation}
\label{Xi0_RB}
\Xi_{0,h}^{RB}=\sum_{i=1}^{n}\frac{\phi(\hat{\epsilon}_{i,\tau}^{A^c}/h)x_i^{(A)}x_i^{(A^c)\top}}{nh},
\end{equation}
and 
\begin{equation}
\label{Psi2}
    \Psi_{0,m}=\sum_{i=1}^{m^*}\frac{1}{\surd{(mm^*)}}(\varpi_{i,m}^{A^c}-\frac{m}{n}\varpi_{1,n}^{A^c})V_i,
\end{equation}
where $\varpi_{i,m}^{A^c}=\sum_{j=i}^{i+m-1}\psi_{\tau}(\hat{\epsilon}_{j,\tau}^{A^c})x_j^{(A^c)}$ and $\hat{\epsilon}_{j,\tau}^{A^c}=y_j-x_j^{(A^c)\top}\hat{\beta}_n^{A^c}$.

The algorithm for implementing $T_n^{LR}$ and $T_n^{RB}$ is given below .
\begin{enumerate}[Step 1:]
    \item Select the block size $m$ with the minimum volatility method and the bandwidth $h$ with cross-validation.
    \item Fit the constrained quantile regression with $x_i$ and $x_i^{(A^c)}$ as covariates, respectively, to get $\hat{\beta}_n$ and $\hat{\beta}_n^{A^c}$. 
    \begin{enumerate}
        \item For $T_n^{LR}$:  Calculate $T_n^{LR}$ as Equation (\ref{LR}), and get $\Xi_{h}$ and $\Xi_{0,h}$ as Equation (\ref{Xi1}) and (\ref{Xi0}).
        \item For $T_n^{RB}$: Calculate $T_n^{RB}$ as Equation (\ref{RB}), and get $\Xi_{h}^{RB}$ and $\Xi_{0,h}^{RB}$ as Equation (\ref{Xi_RB}) and (\ref{Xi0_RB}).
    \end{enumerate}
    \item For $i=1,\ldots,m^*$, generate i.i.d standard normal variables $V_i$, and  calculate $\Psi_m$ and $\Psi_{0,m}$ as Equation (\ref{Psi1}) and (\ref{Psi2}).  Compute $\hat{\Lambda}_n=\mathcal{P}_{Q,\Xi_{h}}(n^{\frac{1}{4}}\hat{\beta}_n+\hat{\Upsilon}_n)-n^\frac{1}{4}\hat{\beta}_n$ and $\hat{\Lambda}_{0,n}=\mathcal{P}_{Q,\Xi_{0,h}}(n^{\frac{1}{4}}\hat{\beta}_n^{A^c}+\hat{\Upsilon}_{0,n})-n^\frac{1}{4}\hat{\beta}_n^{A^c}$ where $\hat{\Upsilon}_n=\Xi_{h}^{-1}\Psi_m$ and  and $\hat{\Upsilon}_{0,n}=\Xi_{0,h}^{-1}\Psi_{0,m}$.
    \begin{enumerate}
        \begin{sloppypar}
        \item For $T_n^{LR}$: Compute the bootstrapped test statistic $T_n^{LR*}=g_1(\hat{\Lambda}_{0,n},\Xi_{0,h},\Psi_{0,m})-g_1(\hat{\Lambda}_{n},\Xi_{h},\Psi_m)$.
        \item For $T_n^{RB}$: Compute the bootstrapped test statistic $T_n^{RB*}=g_2((\Xi_{h}^{RB})^{-1}\hat{\Lambda}_{n},(\Xi_{0,h}^{RB})^{-1}\hat{\Lambda}_{0,n},D_n)$.
        \end{sloppypar}
    \end{enumerate}

    \item Repeat Step 3 for $B$ iterations to get $\{T_{n1}^{LR*},\ldots,T_{nB}^{LR*}\}$ and $\{T_{n1}^{RB*},\ldots,T_{nB}^{RB*}\}$.
    The resulting $p$-values for  $T_n^{LR}$ and  $T_n^{RB}$ are $B^{-1}  \sum_b I(T_n^{LR}>T_{nb}^{LR*})$ and $B^{-1}  \sum_b I(T_n^{RB}>T_{nb}^{RB*})$, respectively. 
 
\end{enumerate}

Theorem 2 shows the projected multiplier bootstrap procedure is consistent, and can detect local alternatives with $n^{-1/2}$ rate.

\begin{sloppypar}
\begin{theorem}
Suppose that regularity conditions (C1)-(C5) hold, the block size $m$ satisfies $m \rightarrow \infty$, $m/n\rightarrow 0$, and the bandwidth $h$ satisfies $h \log^2 n \rightarrow 0$, $nh^3\log^{-2} n \rightarrow \infty$, we have

(\romannumeral1) $\hat{\Lambda}_{n}\Rightarrow\Theta_{Q,\mathcal{M}}(\beta_0,\mathcal{M}^{-1}U)$;

(\romannumeral2) For any $\alpha \in (0,1)$, let $d_{\alpha}^{LR}$ be the $(1-\alpha)$-th quantile of $g_1(\hat{\Lambda}_{0,n},\Xi_{0,h},\Psi_{0,m})-g_1(\hat{\Lambda}_{n},\Xi_{h},\Psi_m)$ conditional on the original data set, then under $H_0$, $\pr(T_n^{LR}>d_{\alpha}^{LR})\rightarrow\alpha$ as $n\rightarrow\infty$; 

(\romannumeral3) For any $\alpha \in (0,1)$, let $d_{\alpha}^{RB}$ be the $(1-\alpha)$-th quantile of $g_2((\Xi_{h}^{RB})^{-1}\hat{\Lambda}_{n},(\Xi_{0,h}^{RB})^{-1}\hat{\Lambda}_{0,n},D_n)$ conditional on the original data set, then under $H_0$, $\pr(T_n^{RB}>d_{\alpha}^{RB})\rightarrow\alpha$ as $n\rightarrow\infty$;

(\romannumeral4) Under $H_{\alpha}$: $\beta_0^{(A)}=L_n\in Q$ with $n^{-1/2}|L_n|\rightarrow \infty$, $\tilde{L}_n(n h^3)^{-1/2}\rightarrow0$, $\tilde{L}_nn^2/h^3\rightarrow0$  and $m \log^8 n/n \rightarrow 0$, where $\tilde{L}_n=\max(|L_n|,n^{-1/2} \log^2 n)$, we have $\pr(T_n^{LR}>d_{\alpha}^{LR})\rightarrow 1$ as $n\rightarrow\infty$ and $\pr(T_n^{RB}>d_{\alpha}^{RB})\rightarrow 1$ as $n\rightarrow\infty$.
\end{theorem}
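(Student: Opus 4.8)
The plan is to treat part (i) as the core statement and to obtain (ii)--(iv) from it using the continuous mapping theorem and a continuity (anti-concentration) property of the limit laws in Theorem~1. Four ingredients are assembled first. \emph{(a)} A conditional central limit theorem for the multiplier block sums: $\Psi_m$, and jointly $(\Psi_m,\Psi_{0,m})$ through the shared weights $V_i$, converges weakly conditional on the data, in probability, to $U$ (respectively to $(U,U^{(A^c)})$ with $U^{(A^c)}$ the sub-vector of $U$); this follows from the argument of \cite{WZ2018} once one checks that replacing $\hat\epsilon_{i,\tau}$ by $\epsilon_{i,\tau}$ inside the block sums is negligible under (C1)--(C5). \emph{(b)} Consistency of the Powell sandwich estimators, $\Xi_h\to_p\mathcal{M}$, $\Xi_{0,h}\to_p\mathcal{M}_0$, $\Xi_h^{RB}\to_p\mathcal{M}^{RB}$, $\Xi_{0,h}^{RB}\to_p\mathcal{M}_0^{RB}$, and $D_n/n\to_p\mathcal{D}$, via a bias--variance decomposition exploiting the piecewise local stationarity of $(x_i,\epsilon_{i,\tau})$ and the bandwidth restrictions $h\log^2 n\to0$, $nh^3\log^{-2}n\to\infty$. \emph{(c)} The geometry-invariant identity (Proposition~1 of \cite{Z2015}): for $a$ above a threshold depending only on $x$, $\beta_0$ and $\Sigma$, one has $\mathcal{P}_{Q,\Sigma}(a\beta_0+x)-a\beta_0=\Theta_{Q,\Sigma}(\beta_0,x)$, where moreover $x\mapsto\Theta_{Q,\Sigma}(\beta_0,x)$ is Lipschitz (being the $\Sigma$-metric projection of $x$ onto the tangent cone of $Q$ at $\beta_0$) and $\Theta_{Q,\Sigma}(\beta_0,x)$ is continuous in $\Sigma$ over the positive-definite cone. \emph{(d)} The Bahadur expansion of Lemma~1, giving $n^{1/4}(\hat\beta_n-\beta_0)=o_p(1)$ and $n^{1/4}(\hat\beta_n^{A^c}-\beta_0^{(A^c)})=o_p(1)$.

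For part (i), write $n^{1/4}\hat\beta_n=n^{1/4}\beta_0+n^{1/4}(\hat\beta_n-\beta_0)$ and set $y_n=\hat\Upsilon_n+n^{1/4}(\hat\beta_n-\beta_0)$, so that $\hat\Lambda_n=\{\mathcal{P}_{Q,\Xi_h}(n^{1/4}\beta_0+y_n)-n^{1/4}\beta_0\}-n^{1/4}(\hat\beta_n-\beta_0)$. By (a)--(b), $y_n\Rightarrow\mathcal{M}^{-1}U$ conditionally in probability and $|y_n|=O_p(1)$, so the threshold in (c) is $O_p(1)$ and is eventually exceeded by $a=n^{1/4}$ when the identity is anchored at $\beta_0$ rather than at $\hat\beta_n$; on that event $\hat\Lambda_n=\Theta_{Q,\Xi_h}(\beta_0,y_n)-n^{1/4}(\hat\beta_n-\beta_0)$, and since $n^{1/4}(\hat\beta_n-\beta_0)=o_p(1)$, the continuity in (c) and the continuous mapping theorem give $\hat\Lambda_n\Rightarrow\Theta_{Q,\mathcal{M}}(\beta_0,\mathcal{M}^{-1}U)$, conditionally in probability. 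The identical argument for the restricted fit gives $\hat\Lambda_{0,n}\Rightarrow\Theta_{Q,\mathcal{M}_0}(\beta_0^{(A^c)},\mathcal{M}_0^{-1}U^{(A^c)})$, jointly with the $\Xi$'s and $\Psi$'s.

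For (ii)--(iii), under $H_0$ both the full and the restricted models are correctly specified --- and (C1)--(C5) are inherited by the restricted model --- so part (i) and its restricted analogue yield joint conditional convergence of $(\hat\Lambda_n,\Xi_h,\Psi_m,\hat\Lambda_{0,n},\Xi_{0,h},\Psi_{0,m})$, and of the rank-based counterpart, to the random vectors of Theorem~1; continuity of $g_1$ and $g_2$ (the latter using $\mathcal{D}\succ0$) then shows that $T_n^{LR*}$ and $T_n^{RB*}$ converge conditionally, in probability, to the exact limit laws of $T_n^{LR}$ and $T_n^{RB}$. Those limit laws are continuous (verified by examining the finitely many activity patterns of the two $\Theta$-projections, on each of which the statistic is a non-degenerate quadratic form in the Gaussian vector, so the law has no atoms); so by P\'olya's theorem $d_\alpha^{LR}$ and $d_\alpha^{RB}$ converge in probability to the respective $(1-\alpha)$-quantiles, whence $\pr(T_n^{LR}>d_\alpha^{LR})\to\alpha$ and $\pr(T_n^{RB}>d_\alpha^{RB})\to\alpha$. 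For (iv), $\hat\beta_n$ stays consistent (so $\hat\Lambda_n$ is controlled as before) while $\hat\beta_n^{A^c}$ converges to a fixed pseudo-true value $\beta_*^{A^c}$ with $\beta_*^{A^c}-\beta_0^{(A^c)}=O(|L_n|)$; a Lemma-1-type expansion under misspecification shows that $T_n^{LR}$ concentrates around a strictly positive deterministic term of exact order $n|L_n|^2$ and that $S_{1,n}-S_{0,n}$ is of order $n|L_n|$, so $T_n^{RB}$ is of order $n|L_n|^2$, both diverging since $n|L_n|^2\to\infty$ under the assumed order of $L_n$ --- here one uses that $x^{(A)}$ is not $\mathcal{M}$-orthogonally spanned by $x^{(A^c)}$, a consequence of the positive-definiteness conditions. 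On the bootstrap side the residuals $\hat\epsilon_{i,\tau}^{A^c}$ now carry a local mean of order $|L_n|$, so $\Psi_{0,m},\hat\Lambda_{0,n}=O_p(1+\sqrt m\,|L_n|)$ and $T_n^{LR*},T_n^{RB*}=O_p(1+m|L_n|^2)$, which is $o_p(n|L_n|^2)$ under the stated rate conditions (notably $m/n\to0$); hence $d_\alpha^{LR}=o_p(T_n^{LR})$ and $d_\alpha^{RB}=o_p(T_n^{RB})$, so both rejection probabilities tend to one.

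The main obstacle is part (i): turning the heuristic substitution of $n^{1/4}\hat\beta_n$ for $n^{1/4}\beta_0$ into a rigorous limit requires a deterministic stability lemma for $(\Sigma,x,a,\gamma)\mapsto\mathcal{P}_{Q,\Sigma}(a\gamma+x)-a\gamma$ that is simultaneously robust to (1) a scaling $a=n^{1/4}\to\infty$ that may lie far below the per-realization threshold at which the geometry-invariant identity would activate if it were anchored at $\hat\beta_n$ itself, (2) an $o_p(1)$ perturbation of the scaled base point whose effect on the active set of $Q$ must be shown to vanish, and (3) a vanishing perturbation of the metric $\Sigma$; the resolution, as sketched above, is to anchor the identity at the true $\beta_0$ and to absorb $n^{1/4}(\hat\beta_n-\beta_0)$ into the perturbation $x$, after which the required threshold is $O_p(1)$ and is eventually dominated by $n^{1/4}$. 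A secondary difficulty is the uniform control, under the rate conditions, of all the $o_p$ remainders in the bootstrap --- especially the replacement of $\hat\epsilon_{i,\tau}$ by $\epsilon_{i,\tau}$ inside the discontinuous $\psi_\tau$ within the block sums, and the behaviour of the Powell estimators under the shifted residuals in part (iv).
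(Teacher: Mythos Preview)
Your proposal is correct and follows essentially the same route as the paper's own proof: for (i) you invoke the conditional CLT for $\Psi_m$ and consistency of $\Xi_h$ from \cite{WZ2018}, combine them with $n^{1/4}(\hat\beta_n-\beta_0)=o_p(1)$ and the geometry-invariant property of \cite{Z2015}; for (ii)--(iii) you pass to the limit via continuous mapping; and for (iv) you compare the divergence rate $n|L_n|^2$ of the test statistics with the slower rate (of order $m\tilde L_n^2$ up to logarithms) of the bootstrap quantiles. The paper's proof is extremely terse---it dispatches (ii)--(iii) as ``obvious'' and sketches (iv) in three lines---whereas you spell out the anchoring-at-$\beta_0$ device in (i) and the anti-concentration step needed to turn conditional weak convergence into convergence of quantiles in (ii)--(iii); these are genuine details that the paper omits but that your write-up correctly identifies and handles.
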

\end{sloppypar}
\section{Simulations}

In this section we conduct Monte Carlo simulation to examine the performance of our proposed method. 

Our simulation is based on the model
\begin{equation}
\label{sim1}
    y_i=\beta_0+\beta_1x_i+\epsilon_i,
\end{equation}
with the inequality constraints $\beta_0\geq 0$ and $\beta_1\geq 0$. We set $\beta_0=1$ throughout the simulation and vary $\beta_1$.

We consider the following three different settings based on model (\ref{sim1}):
\begin{enumerate}[(i)]
    \item Generate $\{x_i\}$ and $\{\epsilon_i\}$ from two independent \textsc{ar}$(1)$ models with coefficient 0.5. This setting represents a stationary model with homoscedastic errors. 
    \item Generate $\{x_i\}$ from an \textsc{ar}$(1)$ model with a coefficient of $-0.3$ when $t\leq0.5$ and an \textsc{ar}$(1)$ model with a coefficient of 0.3 when $t>0.5$. Furthermore, let $e_i=0.7\cos{(2\pi t)}e_{i-1}+\eta_i$ where ${\eta_i}$ follows an i.i.d standard normal distribution independent of $\{x_i\}$. Let $\epsilon_i=e_i-F_{e_i}^{-1}(\tau)$. This setting represents a non-stationary time series with homoscedastic errors.
    \item Generate $\{x_i\}$ as in Setting (\romannumeral2), let $\epsilon_i=(1+x_i^2)^{1/2}(e_i-F_{e_i}^{-1}(\tau))/2$, where $e_i$ is generated as in Setting (\romannumeral2). This is a non-stationary quantile regression model with heteroscedastic errors, as $\{\epsilon_i\}$ and $\{x_i\}$ are dependent. 
\end{enumerate}

Table \ref{alpha} summarizes the empirical coverage probability of the confidence interval of $\beta_1$ in the binding case ($\beta_1=0$) and non-binding case ($\beta_1=1$). We also present the type 1 of error of testing  
\begin{equation}
    H_0:\beta_1=0 \quad vs \quad H_{\alpha}:\beta_1 > 0
    \label{hy}
\end{equation}
with $T_n^{LR}$ and $T_n^{RB}$. A normal kernel function is used to get $\Xi_{h}$. The block size $m$ is chosen with the minimum volatility method and bandwidth $h$ is chosen with cross validition. The bootstrap size is 1000 and the simulation uses 1000 generated data sets.

According to Table \ref{alpha}, the performance of our method is reasonably good across all three settings. The accuracy of $T_n^{LR}$ and $T_n^{RB}$ does not deteriorate as the complexity of model increases, and the coverage probability of the confidence interval is similar under the binding and non-binding cases.

\begin{table}
    \centering 
    \begin{tabular}{c cc cc cc cc}
         \hline
         \hline
         \multicolumn{1}{c}{\textbf{}}&\multicolumn{4}{c}{$\alpha=5\%$}&\multicolumn{4}{c}{$\alpha=10\%$}\\
         \cmidrule(lr){2-5}\cmidrule(lr){6-9}
         \multicolumn{1}{c}{}&\multicolumn{2}{c}{$n=400$}&\multicolumn{2}{c}{$n=800$}&\multicolumn{2}{c}{$n=400$}&\multicolumn{2}{c}{$n=800$}\\
         \cmidrule(lr){2-3}\cmidrule(lr){4-5}\cmidrule(lr){6-7}\cmidrule(lr){8-9}
         Quantile&0.5&0.8&0.5&0.8&0.5&0.8&0.5&0.8\\
         \hline
         \textbf{Setting (\romannumeral1)} & & & & & & & &\\
         \cmidrule(lr){1-1}
         RB&5.4 &6.1 &4.7 &6.3 &10.4 &9.8 &9.9 &11.0 \\

         LR&5.3 &4.9 &5.3 &5.5 &10.2 &8.3 &8.8 &10.4 \\
         
         CI(B)&95.6 &96.0 &95.8 &95.4 &91.9 &93.1 &92.9 &92.6 \\
         
         CI(NB)&93.6 &92.6 &94.5 &93.0 &88.2 &86.5 &89.9 &87.3 \\
         \hline
         \textbf{Setting (\romannumeral2)} & & & & & & & &\\
         \cmidrule(lr){1-1}
         RB&5.6 &5.8 &5.0 &5.2 &11.1 &11.3 &8.8 &10.7 \\

         LR&5.3 &4.4 &4.7 &5.2 &8.4 &8.7 &8.6 &9.4 \\
         
         CI(B)&95.2 &95.5 &96.6 &95.7 &92.6 &92.2 &93.6 &93.3 \\
         
         CI(NB)&94.6 &93.7 &94.9 &93.8 &89.9 &88.0 &89.8 &88.3 \\
         \hline
         \textbf{Setting (\romannumeral3)} & & & & & & & &\\
         \cmidrule(lr){1-1}
         RB&6.0 &5.7 &5.5 &5.5 &10.9 &10.9 &11.2 &10.3 \\

         LR&5.6 &6.2 &5.0 &6.7 &8.9 &10.3 &9.4 &10.5 \\
         
         CI(B)&95.2 &95.5 &95.6 &95.8 &93.3 &92.8 &92.0 &92.5 \\
         
         CI(NB)&94.5 &93.2 &94.6 &93.7 &88.6 &88.8 & 89.7&88.6 \\
         \hline
         \hline
         
    \end{tabular}
    \caption{Simulated Type 1 error rates of the likelihood ratio test (LR) and rank-based test (RB), and coverage probability of the confidence interval (CI) for $\beta_1$; (B) stands for the binding case while (NB) stands for the non-binding case.}
    \label{alpha}
\end{table}

We also compared the empirical power when testing (\ref{hy}) using the Wald test (implemented by constructing the confidence interval), the likelihood ratio test and the rank-based test, with and without considering the inequality constraints. To ensure a fair comparison, we deliberately specify the block size $m$ so that the type I error rate of all the methods are very close to the nominal level 0.05 under $H_0$.
Figure \ref{power} summarizes our results under Setting (\romannumeral3), the results under the other two settings tell a similar story.

We observe in Figure \ref{power} that the methods incorporating the inequality constraints have higher power compared to those that do not. This indicates potential benefit of utilizing the information provided by the inequality constraints in achieving higher statistical power. Among the three methods that account for the inequality constraints, we notice a slight advantage of the rank-based test over the other two methods in terms of power, particularly when $\tau=0.8$.

\begin{figure}
    \centering
    \includegraphics[width=1\linewidth]{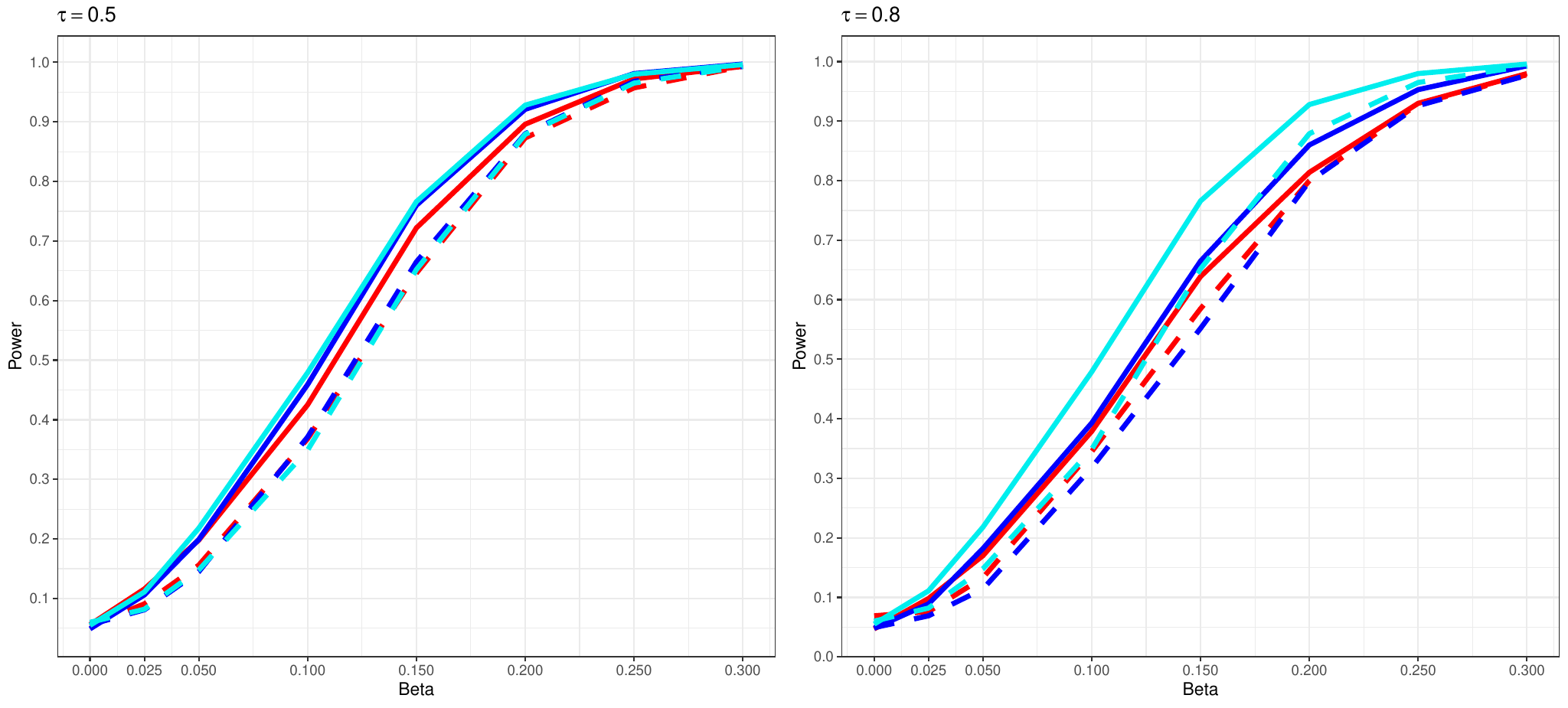}
    \caption{Simulated power of the likelihood ratio test with/without inequality constraints (solid/dashed blue), the rank-based test with/without inequality constraints (solid/dashed cyan), the Wald test with/without inequality constraints (solid/dashed red) under Setting (\romannumeral3). }
    \label{power}
\end{figure}

\section{Electricity Demand Data}
In this section, we apply our proposed method to study whether electricity demand tends to increase or decrease on school days. We use the electricity demand data in Victoria, Australia, collected between 1 January 2015 and 6 October 2020, which is available for download at https://www.kaggle.com/datasets/aramacus/electricity-demand-in-victoria-australia?resource=download. 
This data set includes the following variables: total daily electricity demand ($D_t$), electricity price ($P_t$), temperature ($T_t$), total daily sunlight energy ($S_t$), and an indicator for school days ($I_t$). Because the electricity usage pattern may change drastically in 2020 due to the Covid-19 pandemic, we focus on the 1826 days between 1 January 2015 and 31 December 2019 to ensure a consistent analysis. We consider the following model,
\begin{equation}
    \log(D_t) = \beta_0+\beta_1 \log(P_t) + \beta_2 T_t+\beta_3 T_t^2 +\beta_4 S_t+\beta_5 S_t^2+\beta_6 I_t+\epsilon_t.
\end{equation}
We include the second-order term of $T_t$ and $S_t$ in the model to account for the possible nonlinear impact of the weather condition on electricity demand.

We assume that $\beta_1\leq0$, because by the law of demand in microeconomics, `conditional on all else being equal, as the price of a good increases,  quantity demanded will decrease' (\cite{NS2012}). And we are interested in testing if $\beta_6=0$ under this constraint with our proposed tests at different quantile levels.

\begin{table}
    \centering
    \begin{tabular}{c c c c c c}
         \hline
         \hline
           & $100\hat{\beta}_6$ & LR & LR(NC) & RB & RB(NC)\\
         \hline
          $\tau=0.1$&3.33 &0.010 & 0.122 &0.027 &0.040\\
         \hline
         $\tau=0.5$&3.38 &$<0.001$ & 0.007 &$<0.001$ &$<0.001$\\
         \hline
         $\tau=0.9$&1.73 &0.030 & 0.031 &0.041 &0.037\\
         \hline
         \hline
    \end{tabular}
    \normalsize
    \caption{P-values of the proposed likelihood ratio test (LR) and the rank-based test (RB) at different quantile levels $\tau$; (NC) represents the corresponding tests ignoring the inequality constraint.}
    \label{rd1}
\end{table}


We observe in Table \ref{rd1} that, under the constraint $\beta_1\leq0$, both the likelihood ratio test and the rank-based test reject the null hypothesis at the three quantile levels considered. Our results suggest that electricity demand in Victoria tends to be higher on school days, not only at the median but also at the upper and lower tails of the distribution. We also notice that the p-values usually become larger when the inequality constraint is ignored. Specifically, the likelihood ratio test actually fails to reject the null hypothesis at $\tau=0.1$ if the constraint is ignored. This example further illustrates the benefits of considering the inequality constraint in terms of power.

\section{Regularity Conditions}
In the following regularity conditions, $\chi \in (0,1)$ and $M<\infty$ are constants that may take different values from line to line.

\begin{enumerate}[({C}1)]
    \item The process $\{x_i\}_{i=1}^n$ satisfies $$\max_{0\leq r\leq R}\sup_{b_r<t_1<t_2<b_{r+1}}\|\frac{H_r(t_1,\mathcal{F}_{-1},\mathcal{G}_0)-H_r(t_2,\mathcal{F}_{-1},\mathcal{G}_0)}{t_1-t_2}\|_4 \leq M,$$
    
    $ \Delta_4(H,k)=O(\chi^{|k|})$ and $\max_{1\leq i\leq n}\|x_i\|_{4+\eta} \leq M$, where $\eta$ is a small positive constant.
    
    \item The process $\{\epsilon_{i,\tau}\}_{i=1}^n$  satisfies $$\max_{0\leq r\leq R}\sup_{b_r<t_1<t_2<b_{r+1}}\|\frac{D_r(t_1,\mathcal{F}_0,\mathcal{G}_0)-D_r(t_2,\mathcal{F}_0,\mathcal{G}_0)}{t_1-t_2}\|_4 \leq M.$$
    
    
    Define $F_r^{(q)}(t,x|\mathcal{F}_{k-1},\mathcal{G}_k)=\frac{\partial^q}{\partial x^q} \pr(D_r(t,\mathcal{F}_k,\mathcal{G}_k)\leq x\mid\mathcal{F}_{k-1},\mathcal{G}_k)$, $b_r<t\leq b_{r+1}$. 
    There exist $c>0$ s.t. for $0\leq q\leq p$, 
    \begin{equation*}
        \begin{aligned}
            \max_{0\leq r\leq R}\sup_{b_r<t<b_{r+1},|u|< c}&\|F_r^{(q)}(t,H_r(t,\mathcal{F}_{k-1},\mathcal{G}_{k})^{\top}u\mid\mathcal{F}_{k-1},\mathcal{G}_k)\\
            &-F_r^{(q)}(t,H_r(t,\mathcal{F}_{k-1}^*,\mathcal{G}_{k}^*)^{\top}u\mid\mathcal{F}_{k-1}^*,\mathcal{G}_k^*)\|_4 = O(\chi^k)
        \end{aligned}
    \end{equation*}

    \item For some $c>0$ and $\epsilon>0$, assume that 
    $$\epsilon \leq \min_{0\leq r\leq R}\inf_{b_r<t<b_{r+1},|x|<c}|f_r(t,x\mid\mathcal{F}_{-1},\mathcal{G}_0)|\leq \max_{0\leq r\leq R}\sup_{b_r<t<b_{r+1},x\in \mathbb{R}}|f_r(t,x\mid\mathcal{F}_{-1},\mathcal{G}_0)|\leq M,$$

    $$\max_{0\leq r\leq R}\sup_{b_r<t<b_{r+1},-c<x_1<x_2<c}|\frac{f_r(t,x_1\mid\mathcal{F}_{-1},\mathcal{G}_0)-f_r(t,x_2\mid\mathcal{F}_{-1},\mathcal{G}_0)}{x_1-x_2}|\leq M,$$
    
    $$\max_{0\leq r\leq R}\sup_{b_r<t_1<t_2<b_{r+1}}\|\frac{f_r(t_1,0\mid\mathcal{F}_{-1},\mathcal{G}_0)-f_r(t_2,0\mid\mathcal{F}_{-1},\mathcal{G}_0)}{t_1-t_2}\|_2 \leq M.$$

    \item Let $\lambda_1(\cdot)$ denotes the smallest eigenvalue of a matrix, assume 

$$\min_{0\leq r\leq R}\inf_{b_r<t<b_{r+1}} \lambda_1(E(H_r(t,\mathcal{F}_{-1},\mathcal{G}_0)H_r(t,\mathcal{F}_{-1},\mathcal{G}_0)^{\top}))\geq \epsilon$$
    for some $\epsilon>0$.
    

    \item Assume $\inf_{0\leq t\leq 1}\lambda_1(\Omega(t))\geq \epsilon$
    for some $\epsilon>0$.

\end{enumerate}

Conditions (C1) and (C2) require the data generation mechanism of $\{x_i\}_{i=1}^n$ and $\{\epsilon_{i,\tau}\}_{i=1}^n$ to be smooth between breakpoints, and the processes $\{x_i\}_{i=1}^n$ and $\{F_r^{(q)}(t,x\mid\mathcal{F}_{k-1},\mathcal{G}_k)\}_{i=1}^n$ to be short-term dependent with exponentially decreasing dependence measure.
 Conditions (C1)-(C2) together imply that the process $\{x_i\psi_{\tau}(\epsilon_{i,\tau})\}_{i=1}^n$ is piecewise stochastic Lipschiz continuous and short-term dependent, and therefore $n^{-1/2}\sum_i x_i\psi_{\tau}(\epsilon_{i,\tau})$ convergences to a Gaussian process by Proposition 5 of \cite{Z2013}. Condition (C3) assumes that the conditional density $f_r(t,x\mid\mathcal{F}_{-1},\mathcal{G}_0)$ is bounded away from 0 and infinity, Lipschiz continuous in $x$ and stochastic Lipschiz continuous in $t$. These are common assumptions required to establish the asymptotic properties of quantile regression. Condition (C4) and (C5) require the design matrix and the long-term covariance matrix $\Omega(t)$ to be positive definite, and imply that the limit of  $K_n/n$ is positive definite.

\section*{Acknowledgement}
We thank Anran Jia for her contribution to an earlier version of this paper.

\appendix

\section*{Appendix}

\begin{proof}[of Lemma 1]

Without loss of generality, assume $\beta_0=0$. Let
$$M_n(\beta)=\sum_{i=1}^n\big(\psi_{\tau}(\epsilon_{i,\tau}-x_i^{\top}\beta)x_i-
E(\psi_{\tau}(\epsilon_{i,\tau}-x_i^{\top}\beta)x_i|\mathcal{F}_{i-1},\mathcal{G}_i)\big),$$
and
$$N_n(\beta)=\sum_{i=1}^n\big(E(\psi_{\tau}(\epsilon_{i,\tau}-x_i^{\top}\beta)x_i|\mathcal{F}_{i-1},\mathcal{G}_i)-
E(\psi_{\tau}(\epsilon_{i,\tau}-x_i^{\top}\beta)x_i)\big).$$

Denote $|v|=\surd({v^T v})$ for a vector $v$. According to Lemma A.1 and A.2 of the supplemental material of \cite{WZ2018}, for any sequence $\delta_n\rightarrow0$,
\begin{equation}
\begin{aligned}
    &\sup\limits_{|\beta|\leq\delta_n}|M_n(\beta)+N_n(\beta)-(M_n(0)+N_n(0))|\\
    =&\sup\limits_{|\beta|\leq\delta_n}|\sum_{i=1}^n\psi_{\tau}(\epsilon_{i,\tau}-x_i^{\top}\beta)x_i-
\sum_{i=1}^nE(\psi_{\tau}(\epsilon_{i,\tau}-x_i^{\top}\beta)x_i)-
\sum_{i=1}^n\psi_{\tau}(\epsilon_{i,\tau})x_i|\\
    =&O_p\big((\sum_{i=1}^n\nu_i(\delta_n))^{1/2}\log n+\surd{n}\delta_n\big),
\end{aligned}
\label{thm1.0}
\end{equation}
where $\nu_i(\delta_n)=E\big(|x_i|^2 I(|\epsilon_{i,\tau}|<|x_i|\delta_n)\big)$.

Let $\delta_n=n^{-1/2} \log n$ and integrate the above equation w.r.t $\beta$, we have
\begin{equation}
\begin{aligned}
    &\sup\limits_{|\beta|\leq\delta_n}|\int_0^{\beta}\sum_{i=1}^n\psi_{\tau}(\epsilon_{i,\tau}-x_i^{\top}s)x_i-
\sum_{i=1}^nE\big(\psi_{\tau}(\epsilon_{i,\tau}-x_i^{\top}s)x_i\big)-
\sum_{i=1}^n\psi_{\tau}(\epsilon_{i,\tau})x_ids|\\
=&\sum_{i=1}^n\big(\rho_{\tau}(\epsilon_{i,\tau})-\rho_{\tau}(\epsilon_{i,\tau}-x_i^{\top}\beta)\big)+\sum_{i=1}^n E\big(\rho_{\tau}(\epsilon_{i,\tau}-x_i^{\top}\beta)-\rho_{\tau}(\epsilon_{i,\tau})\big)-\beta^{\top}G_n\\
=&o_p(1).
\end{aligned}
\label{thm1}
\end{equation}

By the convexity of $\rho_{\tau}$ and Taylor expansion, for any $k=1,\ldots,n$,
\begin{equation}
\begin{aligned}
\label{thm1.3}
    E\big(\rho_{\tau}(\epsilon_{i,\tau}-\frac{kx_i^{\top}\beta}{n})-&\rho_{\tau}(\epsilon_{i,\tau}-\frac{(k-1)x_i^{\top}\beta}{n})\big) \leq
    -E\big(\frac{x_i^{\top}\beta}{n}\psi_{\tau}(\epsilon_{i,\tau}-\frac{kx_i^{\top}\beta}{n})\big)\\
    &\leq-E\{\frac{x_i^{\top}\beta}{n}\big(\tau-F_i(0)-\frac{kx_i^{\top}\beta}{n}f_i(0)-\frac{kx_i^{\top}\beta}{n}\int_0^1 f_i(\frac{kx_i^{\top}\beta}{n}s)-f_i(0) ds\big)\}\\
    &\leq E\big(\frac{k(x_i^{\top}\beta)^2}{n^2}f_i(0)\big)+O_p(\frac{k^2(x_i^{\top}\beta)^3}{n^3}),
\end{aligned}
\end{equation}
where we write $F_i(x) = F_r(\frac{i}{n},x\mid\mathcal{F}_{i-1},\mathcal{G}_i)$ and $f_i(x)= f_r(\frac{i}{n},x\mid\mathcal{F}_{i-1},\mathcal{G}_i)$ for simplicity.

Similarly, we have
\begin{equation}
\label{thm1.4}
    E\big(\rho_{\tau}(\epsilon_{i,\tau}-\frac{kx_i^{\top}\beta}{n})-\rho_{\tau}(\epsilon_{i,\tau}-\frac{(k-1)x_i^{\top}\beta}{n})\big) \geq E\big(\frac{(k-1)(x_i^{\top}\beta)^2}{n^2}f_i(0)\big)+O_p(\frac{(k-1)^2(x_i^{\top}\beta)^3}{n^3}).
\end{equation}

Summing Equation (\ref{thm1.3}) and (\ref{thm1.4}) over $k$ from 1 to $n$, we get
\begin{equation*}
    \frac{n-1}{2n} \beta^{\top} E(f_i(0)x_i x_i^{\top}) \beta  \leq E\{\rho_{\tau}(\epsilon_{i,\tau}-x_i^{\top}\beta)-\rho_{\tau}(\epsilon_{i,\tau})\}+ O_p((x_i^{\top}\beta)^3) \leq
    \frac{n+1}{2n} \beta^{\top} E(f_i(0)x_i x_i^{\top}) \beta.
\end{equation*}
Namely,
\begin{equation}
\label{thm1.5}
    \sum_{i=1}^nE\{\rho_{\tau}(\epsilon_{i,\tau}-x_i^{\top}\beta)-\rho_{\tau}(\epsilon_{i,\tau})\}=\frac{1}{2}\beta^{\top}K_n\beta+o_p(1).
\end{equation}

We have the desired result by inserting Equation (\ref{thm1.5}) into Equation (\ref{thm1}).
\end{proof}

\begin{proposition}
Under regularity conditions (C1)-(C5), we have

(\romannumeral1)  $|\tilde{\beta}_n-\beta_0|= o_p(n^{-1/2}\log n)$ and $|\hat{\beta}_n-\beta_0|= o_p(n^{-1/2}\log n)$;

(\romannumeral2) $(\tilde{\beta}_n-\beta_0)-K_n^{-1} G_n=o_p(n^{-1/2})$;

(\romannumeral3) $|\hat{\beta}_n - \hat{\gamma}_n| = o_p(n^{-1/2})$.
\end{proposition}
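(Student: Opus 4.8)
The plan is to obtain all three parts from Lemma 1 by convex‑optimization arguments, the only genuinely new step being the projection approximation in (iii). Write $Z_n(\beta)=\sum_{i=1}^n\{\rho_{\tau}(y_i-x_i^{\top}\beta)-\rho_{\tau}(y_i-x_i^{\top}\beta_0)\}$, a convex function minimized over $\mathbb{R}^p$ by $\tilde{\beta}_n$ and over $Q$ by $\hat{\beta}_n$, and recall $\beta_0\in Q$. Conditions (C1), (C3) and (C4) yield constants $0<c<C<\infty$ with $cn\le\lambda_1(K_n)$ and $\|K_n\|\le Cn$ for all $n$ (using $f_r(t,0\mid\cdot)\ge\epsilon$ together with the eigenvalue bounds on $E\{H_rH_r^{\top}\}$), while (C5) and the piecewise‑stationary central limit theorem (Proposition 5 of \cite{Z2013}) give $G_n=O_p(n^{1/2})$, hence $|K_n^{-1}G_n|=O_p(n^{-1/2})$.

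For (i), fix $\varepsilon\in(0,1)$ and set $r_n=\varepsilon n^{-1/2}\log n$. On the sphere $\{|\beta-\beta_0|=r_n\}$, Lemma 1 gives $Z_n(\beta)\ge -r_n|G_n|+\tfrac12 cn r_n^2-o_p(1)=\tfrac12 c\varepsilon^2\log^2 n-O_p(\varepsilon\log n)-o_p(1)$, which exceeds $Z_n(\beta_0)=0$ with probability tending to one. By convexity of $Z_n$ and of the feasible region ($\mathbb{R}^p$ or $Q$), any minimizer is then strictly inside the ball of radius $r_n$; since $\varepsilon$ is arbitrary this gives $|\tilde{\beta}_n-\beta_0|=o_p(n^{-1/2}\log n)$ and, identically, $|\hat{\beta}_n-\beta_0|=o_p(n^{-1/2}\log n)$. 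For (ii), set $\beta^{\ast}=\beta_0+K_n^{-1}G_n$, which lies in $B_n=\{|\beta-\beta_0|\le n^{-1/2}\log n\}$ with probability tending to one, as does $\tilde{\beta}_n$ by (i); applying Lemma 1 at both points and using $Z_n(\tilde{\beta}_n)\le Z_n(\beta^{\ast})$ with the quadratic expansion of $Z_n$ gives $\tfrac12(\tilde{\beta}_n-\beta^{\ast})^{\top}K_n(\tilde{\beta}_n-\beta^{\ast})\le o_p(1)$, and dividing by $\lambda_1(K_n)\ge cn$ yields $|\tilde{\beta}_n-\beta^{\ast}|=o_p(n^{-1/2})$, which is the claim.

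For (iii), compare $\hat{\beta}_n$ with the minimizer $\hat{\gamma}_n$ over $Q$ of the convex quadratic $\tilde{q}(\beta)=\tfrac12(\beta-\beta_0)^{\top}K_n(\beta-\beta_0)-(\beta-\beta_0)^{\top}K_n(\tilde{\beta}_n-\beta_0)$; expanding $\mathcal{P}_{Q,K_n/n}(\tilde{\beta}_n)$ shows $\hat{\gamma}_n$ is exactly this minimizer. Non‑expansiveness of the metric projection in the $K_n/n$‑norm and (ii) give $|\hat{\gamma}_n-\beta_0|=O_p(n^{-1/2})$, and (i) gives $|\hat{\beta}_n-\beta_0|=o_p(n^{-1/2}\log n)$, so with probability tending to one both points lie in $B_n$. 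Writing $R_n(\beta)=Z_n(\beta)-\tilde{q}(\beta)$, Lemma 1 and (ii) give $R_n(\beta)=(\beta-\beta_0)^{\top}K_n\delta_n+o_p(1)$ uniformly over $\beta\in B_n$, where $\delta_n=(\tilde{\beta}_n-\beta_0)-K_n^{-1}G_n$; by Cauchy--Schwarz in the $K_n$‑inner product, $|(\beta-\beta_0)^{\top}K_n\delta_n|\le|K_n^{1/2}(\beta-\beta_0)|\cdot|K_n^{1/2}\delta_n|$ with $|K_n^{1/2}\delta_n|\le\|K_n\|^{1/2}|\delta_n|=o_p(1)$, so $\sup_{\beta\in B_n}|R_n(\beta)|\le|K_n^{1/2}(\beta-\beta_0)|\cdot o_p(1)+o_p(1)$. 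The basic inequality $Z_n(\hat{\beta}_n)\le Z_n(\hat{\gamma}_n)$ then reads $\tilde{q}(\hat{\beta}_n)-\tilde{q}(\hat{\gamma}_n)\le|R_n(\hat{\beta}_n)|+|R_n(\hat{\gamma}_n)|$, while first‑order optimality of $\hat{\gamma}_n$ over the convex set $Q$ and exactness of the quadratic Taylor expansion give $\tilde{q}(\hat{\beta}_n)-\tilde{q}(\hat{\gamma}_n)\ge\tfrac12|K_n^{1/2}(\hat{\beta}_n-\hat{\gamma}_n)|^2$. Putting $a=|K_n^{1/2}(\hat{\beta}_n-\hat{\gamma}_n)|$ and $|K_n^{1/2}(\hat{\beta}_n-\beta_0)|\le a+|K_n^{1/2}(\hat{\gamma}_n-\beta_0)|$ with $|K_n^{1/2}(\hat{\gamma}_n-\beta_0)|\le\|K_n\|^{1/2}|\hat{\gamma}_n-\beta_0|=O_p(1)$, these combine into the self‑bounding inequality $\tfrac12 a^2\le a\cdot o_p(1)+o_p(1)$, whence $a=o_p(1)$; dividing $a^2$ by $\lambda_1(K_n)\ge cn$ gives $|\hat{\beta}_n-\hat{\gamma}_n|=o_p(n^{-1/2})$.

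The main obstacle is exactly this remainder control in (iii): measuring $\hat{\beta}_n-\hat{\gamma}_n$ in the Euclidean norm and bounding $(\beta-\beta_0)^{\top}K_n\delta_n$ crudely by $|\beta-\beta_0|\,\|K_n\|\,|\delta_n|$ over a ball of radius $n^{-1/2}\log n$ only yields the rate $o_p(n^{-1/2})$ up to a power of $\log n$, and iterating that crude bound merely shaves the exponent toward zero without removing it. The resolution is to run the whole comparison in the $K_n^{1/2}$‑geometry --- where the projection defining $\hat{\gamma}_n$ is non‑expansive, so $\hat{\gamma}_n-\beta_0$ inherits the clean $O_p(n^{-1/2})$ rate of $\tilde{\beta}_n-\beta_0$; where $R_n$ is controlled through $|K_n^{1/2}(\beta-\beta_0)|$ rather than $|\beta-\beta_0|$; and where the strong‑convexity lower bound for $\tilde{q}$ is expressed in the same norm --- so that error and signal scale identically and the logarithmic factors cancel in the self‑bounding inequality.
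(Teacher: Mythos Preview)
Your proof is correct. Parts (i) and (ii) coincide with the paper's argument: both use Lemma~1 on spheres of radius $\varepsilon n^{-1/2}\log n$ (respectively $cn^{-1/2}$) together with convexity of $\rho_\tau$ to trap the minimizer, and the Bahadur representation in (ii) is obtained in the same way.

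For (iii) the two executions differ slightly. The paper replaces $G_n$ by $K_n\tilde\beta_n$ in Lemma~1, then runs a second sphere argument: on $\{\beta\in Q:|\beta-\hat\gamma_n|=cn^{-1/2}\}$ it shows
\[
\sum_{i}\{\rho_\tau(\epsilon_{i,\tau}-x_i^{\top}\beta)-\rho_\tau(\epsilon_{i,\tau}-x_i^{\top}\hat\gamma_n)\}
=\tfrac12(\beta-\hat\gamma_n)^{\top}K_n(\beta-\hat\gamma_n)+\nabla g(\hat\gamma_n)^{\top}(\beta-\hat\gamma_n)+o_p(1),
\]
uses first-order optimality of $\hat\gamma_n$ over $Q$ to make the gradient term nonnegative, and concludes by convexity. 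Your route instead compares $Z_n(\hat\beta_n)\le Z_n(\hat\gamma_n)$ directly, controls the remainder $R_n(\beta)=(\beta-\beta_0)^{\top}K_n\delta_n+o_p(1)$ via Cauchy--Schwarz in the $K_n$-inner product, and closes with the self-bounding inequality $\tfrac12 a^2\le a\cdot o_p(1)+o_p(1)$. The two arguments are equivalent in content; what your $K_n^{1/2}$-geometry buys is a cleaner bookkeeping of the cross term $(\beta-\beta_0)^{\top}K_n\delta_n$. Indeed, the paper's displayed claim that $\sup_{|\beta|\le n^{-1/2}\log n}|Z_n(\beta)+\beta^{\top}K_n\tilde\beta_n-\tfrac12\beta^{\top}K_n\beta|=o_p(1)$ is, strictly speaking, only $o_p(\log n)$ over the full ball (since $|\beta^{\top}K_n\delta_n|\le n^{-1/2}\log n\cdot Cn\cdot o_p(n^{-1/2})$); the paper's subsequent use is still valid because it only evaluates at points $\beta$ with $|\beta|=O_p(n^{-1/2})$, but your formulation avoids the imprecision altogether.
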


\begin{proof}[of Proposition A1]

Without loss of generality, assume $\beta_0=0$.

(\romannumeral1) For any $c>0$, let $\delta_n=(n^{-1/2}\log n)c$ and $\lambda_n$ be the smallest eigenvalue of $K_n/n$. $\lambda_n$ is strictly positive by (A3) and (A4), and  $\beta^{\top} K_n \beta \geq_p (\log n)^2 c^2 \lambda_n$ for $|\beta|=\delta_n $, 
By Proposition 3.1 of \cite{WZ2018}, we also have $\beta^{\top} G_n = O_p(c\log n )$ for $|\beta|=\delta_n $. Note that \cite{WZ2018} considered general M-estimation and required the stochastic Lipschiz continous and short-term dependent conditions in (C1) and (C2) to hold for higher moments, we checked that their conditions could be relaxed in the context of quantile regression.

By Lemma 1,
$$\pr\{\inf_{|\beta|= \delta_n} \sum_{i=1}^n(\rho_{\tau}(\epsilon_{i,\tau}-x_i^{\top}\beta)-\rho_{\tau}(\epsilon_{i,\tau})) \leq 0\}\rightarrow 0,$$
and by the convexity of $\rho_{\tau}$,
$$\pr\{\inf_{|\beta|\geq \delta_n} \sum_{i=1}^n(\rho_{\tau}(\epsilon_{i,\tau}-x_i^{\top}\beta)-\rho_{\tau}(\epsilon_{i,\tau})) \leq 0\}\rightarrow 0.$$

Namely, for any $c>0$, we have
$$ \pr\{|n^{1/2}(\log n)^{-1}\tilde{\beta}_n|>c\} \rightarrow 0 $$

The desired results for $\hat{\beta}_n$ can be derived in the same way.

(\romannumeral2) Let $\bar{\beta} = K_n^{-1} G_n=O_p(n^{-1/2})$. By Lemma 1,

\begin{equation}
\label{pro1.1}
    \sum_{i=1}^n\{\rho_{\tau}(\epsilon_{i,\tau}-x_i^{\top}\bar{\beta})-\rho_{\tau}(\epsilon_{i,\tau})\}+\frac{1}{2} \bar{\beta}^{\top} K_n \bar{\beta}=o_p(1).
\end{equation}

Let $\delta_n = n^{-1/2} c$, by Lemma 1 and Equation (\ref{pro1.1}),
$$\sup\limits_{|\beta-\bar{\beta}|=\delta_n}|\sum_{i=1}^n\big(\rho_{\tau}(\epsilon_{i,\tau}-x_i^{\top}\beta)-\rho_{\tau}(\epsilon_{i,\tau}-x_i^{\top}\bar{\beta})\big)-\frac{1}{2}(\beta-\bar{\beta})^{\top}K_n(\beta-\bar{\beta})|=o_p(1). $$

Because $(\beta-\bar{\beta})^{\top}K_n(\beta-\bar{\beta}) \geq_p \lambda_n c^2$ when $|\beta-\bar{\beta}|=\delta_n$, 

$$\pr\{\inf_{|\beta-\bar{\beta}|\geq \delta_n} \sum_{i=1}^n(\rho_{\tau}(\epsilon_{i,\tau}-x_i^{\top}\beta)-\rho_{\tau}(\epsilon_{i,\tau}-x_i^{\top}\bar{\beta})) \leq 0\}\rightarrow 0.$$
Namely $\surd{n}(\tilde{\beta}-\bar{\beta})=o_p(1)$.

(\romannumeral3)
By Lemma 1 and Proposition A1(\romannumeral2), we have
\begin{equation}
    \sup\limits_{|\beta|\leq n^{-1/2}\log n}|\sum_{i=1}^n(\rho_{\tau}(\epsilon_{i,\tau}-x_i^{\top}\beta)-\rho_{\tau}(\epsilon_{i,\tau})+\beta^{\top}K_n\tilde{\beta}-\frac{1}{2}\beta^{\top}K_n\beta|=o_p(1).
\end{equation}

Recall that $$\hat{\beta}_n = \argmin\limits_{\beta\in Q}\sum_{i=1}^n (\rho_{\tau}(\epsilon_{i,\tau}-x_i^{\top} \beta)-\rho_{\tau}(\epsilon_{i,\tau})),$$
and 
\begin{equation*}
\begin{aligned}
   \hat{\gamma}_n&=\argmin\limits_{\beta\in Q}(\beta-\tilde{\beta}_n)^{\top}K_n(\beta-\tilde{\beta}_n)\\
    &=\argmin\limits_{\beta\in Q} (-\beta^{\top}K_n\tilde{\beta}+\frac{1}{2}\beta^{\top}K_n\beta).
\end{aligned}
\end{equation*}
Let $\delta_n = n^{-1/2} c$, by elementary calculation,
$$\sup\limits_{|\beta-\hat{\gamma}_n|=\delta_n,\beta\in Q}|\sum_{i=1}^n(\rho_{\tau}(\epsilon_{i,\tau}-x_i^{\top}\beta)-\rho_{\tau}(\epsilon_{i,\tau}-x_i^{\top}\hat{\gamma}_n))-\frac{1}{2}(\beta-\hat{\gamma}_n)^{\top}K_n(\beta-\hat{\gamma}_n)-\bigtriangledown g(\hat{\gamma}_n)^{\top}(\beta-\hat{\gamma}_n)|=o_p(1),$$
where
$g(\beta)=-\beta^{\top}K_n\tilde{\beta}+\frac{1}{2}\beta^{\top}K_n\beta$. Because $g(\cdot)$ is a convex function and $\hat{\gamma}_n$ is the minimum of $g(\cdot)$, $\bigtriangledown g(\hat{\gamma}_n)^{\top}(\beta-\hat{\gamma}_n)$ is non-negative.
Also notice that $(\beta-\hat{\gamma}_n)^{\top}K_n(\beta-\hat{\gamma}_n)\geq \lambda_n c^2$ when $|\beta-\hat{\gamma}_n|=\delta_n$, we have
$$\pr\{\inf_{|\beta-\hat{\gamma}_n|\geq \delta_n,\beta\in Q} \sum_{i=1}^n(\rho_{\tau}(\epsilon_{i,\tau}-x_i^{\top}\beta)-\rho_{\tau}(\epsilon_{i,\tau}-x_i^{\top}\hat{\gamma}_n)) \leq 0\}\rightarrow 0.$$
We have the desired result.
\end{proof}

\begin{proof}[of Theorem 1]

(\romannumeral1) We have $n^{-1/2} G_n\Rightarrow U$ and $|K_n/n-\mathcal{M}|=o_p(1)$ by Proposition 3.1 and Theorem 3.1 of \cite{WZ2018}, respectively. Then $ \surd{n}(\tilde{\beta}_n-\beta_0)\Rightarrow\mathcal{M}^{-1}U$ by the continuous mapping theorem and Proposition A1.

By Proposition A1 and Proposition 1(\romannumeral1)(\romannumeral4) of \cite{Z2015}
\begin{equation}
\begin{aligned}
    \surd{n}(\hat{\beta}_n-\beta_0) &= \surd{n} \mathcal{P}_{Q,\frac{k_n}{n}}(\tilde{\beta}_n) - \surd{n} \beta_0 + o_p(1)\\
    &= \mathcal{P}_{Q,\frac{k_n}{n}}(\surd{n}\beta_0 + \surd{n}K_n^{-1}G_n) - \surd{n} \beta_0 + o_p(1).
\end{aligned}
\end{equation}
Proposition 1(\romannumeral2)(\romannumeral4)(\romannumeral5) of \cite{Z2015} then gives
$\surd{n}(\hat{\beta}_n-\beta_0)\Rightarrow\Theta_{Q,\mathcal{M}}(\beta_0,\mathcal{M}^{-1}U)$.

(\romannumeral2) By Lemma 1 and Proposition A1,  $$\sum_{i=1}^n(\rho_{\tau}(y_i-x_i^{\top}\hat{\beta}_n)-\rho_{\tau}(y_i-x_i^{\top}\beta_0))=\frac{1}{2}(\hat{\beta}_n-\beta_0)^{\top}K_n(\hat{\beta}_n-\beta_0)-G_n(\hat{\beta}_n-\beta_0)+o_p(1).$$

By Theorem 1(\romannumeral1),  $$\frac{1}{2}(\hat{\beta}_n-\beta_0)^{\top}K_n(\hat{\beta}_n-\beta_0)-G_n(\hat{\beta}_n-\beta_0)\Rightarrow g_1(\Theta_{Q,\mathcal{M}}(\beta_0,\mathcal{M}^{-1}U),\mathcal{M},U).$$

Then the desired result follows trivially.

(\romannumeral3) By Equation (\ref{thm1.0}) and Proposition A1,
$$\frac{1}{\surd{n}}\sum_{i=1}^n\psi_{\tau}(y_i-x_i^{\top}\hat{\beta})x_i^{(A)}=\frac{1}{\surd{n}}\sum_{i=1}^n\psi_{\tau}(\epsilon_{i,\tau})x_i^{(A)}+\frac{1}{n}\sum_{i=1}^n E(f_r(\frac{i}{n},0\mid\mathcal{F}_{i-1},\mathcal{G}_i)x_i^{(A)} x_i^{\top})\surd{n}(\hat{\beta}_n-\beta_0)+o_p(1).$$
Therefore similar to the above arguments, we have
$S_{1,n}\Rightarrow U^{(A)}+\mathcal{M}^{RB}\Theta_{Q,\mathcal{M}}\big(\beta_0,\mathcal{M}^{-1}U)$ and $S_{0,n}\Rightarrow U^{(A)}+\mathcal{M}_0^{RB}\Theta_{Q_0,\mathcal{M}_0}(\beta_0^{(A^c)},\mathcal{M}_0^{-1}U^{(A^c)})$, which leads to the desired result.
\end{proof}

\begin{proof}[of Theorem 2]
    
(\romannumeral1)
By Theorem 3.3 and 3.4 in \cite{WZ2018}, $\hat{\Upsilon}_{n}\Rightarrow\mathcal{M}^{-1}U$.

By Proposition A1, $n^{\frac{1}{4}}\hat{\beta}_n-n^{\frac{1}{4}}\beta_0=o_p(1)$, then we have $\hat{\Lambda}_{n}\Rightarrow\Theta_{Q,\mathcal{M}}(\beta_0,\mathcal{M}^{-1}U)$.

(\romannumeral2) and (\romannumeral3) are obvious.

(\romannumeral4)
By Proposition B.1 of the supplementary material of \cite{WZ2018}, under $H_{\alpha}$,
$$\hat{\Upsilon}_{0,n}= \mathcal{M}_0^{-1}U^{(A^c)} + O_p(\surd{m}\tilde{L}_n+(n\sum_{i=1}^n\nu_i(\tilde{L}_n))^{1/2}\log n).$$ Thus by Proposition 1 of \cite{Z2015}, the fastest rate at which $d_{\alpha}^{LR}$ converges to infinity is $m \tilde{L}_n^2 \log^2 n$. However, similar to the proof of Lemma 1, under $H_{\alpha}$, $T_n^{LR}$ go to infinity at rate $n |L_n|^2$, which is faster than $d_{\alpha}^{LR}$. Therefore, $\pr(T_n^{LR}>d_{\alpha}^{LR})\rightarrow1$. Similarly, $\pr(T_n^{LR}>d_{\alpha}^{RS})\rightarrow1$.
\end{proof}

\bibliography{ref}

\end{document}